\documentclass[aps,pra,10pt,letterpaper,twocolumn,tightenlines,superscriptaddress,notitlepage,longbibliography,nofootinbib]{revtex4-2}

\usepackage[utf8]{inputenc}
\usepackage[english]{babel}
\usepackage[T1]{fontenc}
\usepackage{amsmath}
\makeatletter
\newcommand{\pushright}[1]{\ifmeasuring@#1\else\omit\hfill$\displaystyle#1$\fi\ignorespaces}
\newcommand{\pushleft}[1]{\ifmeasuring@#1\else\omit$\displaystyle#1$\hfill\fi\ignorespaces}
\makeatother

\makeatletter 
    
\renewcommand\onecolumngrid{% <<<<<<
\do@columngrid{one}{\@ne}%
\def\set@footnotewidth{\onecolumngrid}% <<<<<<<<<<<<<<<<
\def\footnoterule{\kern-6pt\hrule width 1.5in\kern6pt}%
}

\renewcommand\twocolumngrid{% <<<<<<
        \def\footnoterule{% restore rule
        \dimen@\skip\footins\divide\dimen@\thr@@
        \kern-\dimen@\hrule width.5in\kern\dimen@}
        \do@columngrid{mlt}{\tw@}
}%

\makeatother    
\allowdisplaybreaks

\usepackage{tikz}
\usepackage{lipsum}
\usepackage{amssymb}
\usepackage{color}
\usepackage{mathrsfs}
\usepackage{amsbsy}
\usepackage{amsthm}
\usepackage{array}
\newcolumntype{M}[1]{>{\centering\arraybackslash}m{#1}}
\newcolumntype{N}{@{}m{0pt}@{}}
\usepackage{balance}
\usepackage{thmtools,thm-restate}
\usepackage{graphicx}
\usepackage{tikz}

\usepackage{comment}

\usepackage{bbm}
\usepackage{bm}
\usepackage{epsfig}
\usepackage{xfrac}
\usepackage{xcolor}
\usepackage{enumerate}
\usepackage[shortlabels]{enumitem}
\usepackage{graphicx}% Include figure files
\usepackage{bm}% bold math
\usepackage{hyperref}
\usepackage{booktabs}
\usepackage{makecell}

\usepackage{subfigure}
\hypersetup{
colorlinks=true,
linkcolor=blue,
filecolor=blue,
citecolor=blue,  
urlcolor=blue,
}

\newtheorem{theorem}{Theorem}
\newtheorem{theorem-main}{Theorem}

\newtheorem{lemma}{Lemma}

\newcommand{\cU}{\check{U}}

\newcommand{\htheta}{\hat{\theta}}

\DeclareFontFamily{U}{mathx}{}
\DeclareFontShape{U}{mathx}{m}{n}{<-> mathx10}{}
\DeclareSymbolFont{mathx}{U}{mathx}{m}{n}
\DeclareMathAccent{\widecheck}{0}{mathx}{"71}

\newcommand{\cpsi}{\widecheck{\psi}}
\newcommand{\ctheta}{\check{\theta}}
\newcommand{\cphi}{\check{\phi}}
\newcommand{\zphi}{{\phi^0}}
\newcommand{\ztheta}{{\theta^0}}
\newcommand{\zU}{{U^0}}
\newcommand{\dpu}{{d}_{\mathrm{PU}}}
\newcommand{\zvarphi}{{\varphi^0}}
\newcommand{\cvarphi}{\check{\varphi}}

\DeclareMathOperator{\Tr}{Tr}

\usepackage{stackengine}

\usepackage{braket}
\renewcommand{\v}[1]{\ensuremath{\mathbf{#1}}} % for vectors
\newcommand{\abs}[1]{\left| #1 \right|}
\newcommand{\norm}[1]{\left\| #1 \right\|} % for norm
\newcommand{\trace}{\mathrm{Tr}}
\renewcommand{\bell}{{\bar\ell}}

\newcommand{\mH}{{\mathcal{H}}}

\newcommand{\mM}{{\mathcal{M}}}

\newcommand{\tW}{{\widetilde{W}}}

\newcommand{\id}{{\mathbbm{1}}}

\newcommand{\vX}{{\v{X}}}

\newcommand{\bR}{{\mathbb{R}}}

\newcommand{\bE}{{\mathbb{E}}}

\newcommand{\bC}{{\mathbb{C}}}

\renewcommand{\Re}{{\mathrm{Re}}}
\renewcommand{\Im}{{\mathrm{Im}}}
\newcommand{\prob}{{\mathrm{Pr}}}

\renewcommand{\epsilon}{\varepsilon}
\newcommand{\appropto}{\mathrel{\vcenter{
  \offinterlineskip\halign{\hfil$##$\cr
    \propto\cr\noalign{\kern2pt}\sim\cr\noalign{\kern-2pt}}}}}
\newcommand{\dket}[1]{\vert {#1} \rangle \! \rangle} 
\newcommand{\dbra}[1]{\langle \! \langle {#1} \vert} 
\definecolor{fluorescentpink}{rgb}{1.0, 0.08, 0.58}

\let\baraccent=\= % rename builtin command \= to \baraccent
\renewcommand{\=}[1]{\stackrel{#1}{=}} % for putting numbers above =
\newcommand{\thmref}[1]{\hyperref[#1]{Theorem~\ref{#1}}}
\newcommand{\lemmaref}[1]{\hyperref[#1]{Lemma~\ref{#1}}}
\newcommand{\propref}[1]{\hyperref[#1]{Proposition~\ref{#1}}}
\newcommand{\corollaryref}[1]{\hyperref[#1]{Corollary~\ref{#1}}}
\newcommand{\defref}[1]{\hyperref[#1]{Definition~\ref{#1}}}
\newcommand{\figref}[1]{\hyperref[#1]{Fig.~\ref{#1}}}
\newcommand{\tabref}[1]{\hyperref[#1]{Table~\ref{#1}}}
\newcommand{\figaref}[1]{\hyperref[#1]{Fig.~\ref{#1}(a)}}
\newcommand{\figbref}[1]{\hyperref[#1]{Fig.~\ref{#1}(b)}}
\newcommand{\figcref}[1]{\hyperref[#1]{Fig.~\ref{#1}(c)}}
\newcommand{\figdref}[1]{\hyperref[#1]{Fig.~\ref{#1}(d)}}
\newcommand{\figeref}[1]{\hyperref[#1]{Fig.~\ref{#1}(e)}}
\newcommand{\figfref}[1]{\hyperref[#1]{Fig.~\ref{#1}(f)}}
\renewcommand{\eqref}[1]{\hyperref[#1]{Eq.~(\ref{#1})}}
\newcommand{\secref}[1]{\hyperref[#1]{Sec.~\ref{#1}}}
\newcommand{\eqsref}[2]{\hyperref[#1]{Eqs.~(\ref{#1})-(\ref{#2})}}
\newcommand{\appref}[1]{\hyperref[#1]{Appx.~\ref{#1}}}

\begin{document}

\title{Quantum metrology of mixed states via purification}

\author{Sisi Zhou}\email{sisi.zhou26@gmail.com}
\affiliation{Perimeter Institute for Theoretical Physics, Waterloo, Ontario N2L 2Y5, Canada}
\affiliation{Department of Physics and Astronomy, Department of Applied Mathematics, and Institute for Quantum Computing, University of Waterloo, Ontario N2L 2Y5, Canada}

\date{\today}

\begin{abstract}
Multiparameter quantum metrology is ubiquitous in physics, yet it remains fundamentally challenging. As reflected by the uncertainty principle, distinct unknown parameters generally cannot be estimated simultaneously at their individual ultimate precision limits. The quantum Cram\'{e}r--Rao bound (QCRB) and the Holevo Cram\'{e}r--Rao bound (HCRB) represent the ultimate precision bounds in multiparameter quantum metrology. Here we introduce new formulations of these bounds based on quantum state purification. We show that, for any mixed state, their values are directly related to those of its purification, provided that nuisance parameters are introduced on the environmental system. Leveraging this connection, we develop a remarkably simple method for asymptotically attaining either the HCRB or twice the QCRB for arbitrary mixed states, using random purification channels and individual measurements.
\end{abstract}

\maketitle

\emph{Introduction.---}
Quantum metrolgy~\cite{degen2017quantum} aims to extract physical parameters from quantum systems to the highest precision. In many applications, the parameters of interest are not a single number, as in sensing of multiple components of magnetic fields, imaging, Hamiltonian estimation and noise characterization. Multi-parameter estimation problems are fundamentally different from the single-parameter counterparts. The measurement that is most informative about one parameter can be incompatible with the measurement that is most informative about another. As a result, the individual quantum limits for different parameters generally cannot all be reached simultaneously.

The Cram\'{e}r--Rao bound gives the ultimate precision bound for metrology, attainable through repeated sampling from parameter-dependent probability distributions. Single-parameter quantum metrology is analogous: a quantum state is measured many times and the quantum Cram\'{e}r--Rao bound (QCRB) is attainable by performing an optimal measurement~\cite{helstrom1967minimum,helstrom1968minimum,helstrom1976quantum,holevo2011probabilistic,braunstein1994statistical,barndorff2000fisher,paris2009quantum}. Multi-parameter metrology is fundamentally different due to measurement incompatibility~\cite{gill2000state,demkowicz2020multi,liu2019quantum,sidhu2020geometric,conlon2021efficient,ragy2016compatibility}. The ultimate precision limit is then captured by the Holevo Cram\'{e}r--Rao bound (HCRB)~\cite{ragy2016compatibility,kahn2009local,yamagata2013quantum,yang2019attaining,albarelli2019evaluating}, which minimizes the weighted mean squared error over compatible measurements. Despite its theoretical power, the HCRB is difficult to attain~\cite{kahn2009local,yamagata2013quantum,yang2019attaining,tsang2026approaching}, where known optimal measurements typically act collectively on multiple copies of the state with high implementation cost. This contrasts with the single-parameter case, where the QCRB and HCRB coincide and are efficiently attainable with measurements on individual copies~\cite{braunstein1994statistical}.

Fortunately, multi-parameter quantum metrology is greatly simplified for pure states. It was shown in~\cite{matsumoto2002new,gorecki2020optimal} that the HCRB of pure states can be attained using efficiently computable, individual measurements. Moreover, although the QCRB is generally not tight, it can also be attained up to a factor of two using individual measurements~\cite{hayashi1998asymptotic,li2016fisher,zhu2018universally,lu2025quantum}.  

Realistic metrological probes, however, are rarely pure. Decoherence entangles the system with unobserved environments, producing mixed states. 
A natural way to relate mixed and pure states is purification, which embeds a mixed state into a larger pure state by introducing an artificial environment. Our work bridges pure-state and mixed-state metrology by introducing purification-based formulations of the QCRB and HCRB for mixed states. 
In particular, by introducing nuisance parameters that characterize the environmental degrees of freedom, we relate the precision bounds of a state purification to those of the original mixed state. 
% Note that connections between mixed states and purifications have been explored in other contexts, such as state tomography~\cite{pelecanos2025mixed}, testing~\cite{chen2024local}, and single-parameter metrology~\cite{fujiwara2008fibre,escher2011general,demkowicz2012elusive}, but not in multi-parameter metrology. 
% By introducing additional nuisance parameters in the environmental system, we show that the QCRB and HCRB of mixed states can be computed from the corresponding bounds for their purifications. 
This formulation has profound implications: it yields new measurement strategies that efficiently attain the HCRB, as well as twice the QCRB, using random purification channels and individual measurements. Random purification channels~\cite{tang2025conjugate,girardi2025random,walter2025random,pelecanos2025mixed} transform multiple copies of a mixed state into multiple copies of its purification, with the environmental degrees of freedom randomized. They are known to have efficient circuit complexity that scales polynomially with the number of copies of the state~\cite{pelecanos2025mixed,bacon2006efficient,burchardt2025high}.  
After applying the random purification channel to all copies of the state, the rest of the measurement strategy follows a two-stage procedure (see \figref{fig:meas}): the first stage performs a coarse estimation of the parameters, including the nuisance parameters, and the second stage carries out optimal local estimation. 
By treating the environmental degrees of freedom introduced by purification as nuisance parameters, our formulation provides a systematic way to connect mixed-state precision bounds with their pure-state counterparts. This approach may also be useful for other noisy metrological tasks, including quantum channel estimation~\cite{fujiwara2008fibre,demkowicz2014using,gorecki2020optimal,hayashi2024finding,zhou2021asymptotic,mele2025optimal,chen2025quantum,yoshida2025random,girardi2025random2,chen2026quantum}.

\emph{Preliminaries.---}Consider a $d$-dimensional parameterized quantum state $\rho_{\theta}$ in Hilbert space $\mH_S$, where parameters are denoted by $\theta = (\theta_1,\theta_2,\ldots,\theta_m)$. We assume $\rho_\theta$ is three times continuously differentiable. $m$ is the number of parameters and $\Theta \subseteq \bR^m$ is the domain of $\theta$ which we assume to be open and bounded. An estimator $\htheta(\ell)$ and the corresponding positive operator-valued measurement (POVM) $M = \{M_\ell\}_{\ell}$ is a function that maps the measurement outcomes $\ell$ to $\Theta$. Here $M_\ell$ are positive semidefinite measurement operators satisfying $\sum_\ell M_\ell = \id$.  The performance of an estimator at $\theta$ is characterized by the \emph{mean squared error matrix} (MSEM)
\begin{equation}
    V_{ij} = \sum_\ell (\htheta(\ell)_i - \theta_i)(\htheta(\ell)_j - \theta_j) \trace(\rho_\theta M_\ell). 
\end{equation}

\begin{figure}[tb]
    \centering
    \includegraphics[width=0.75\linewidth]{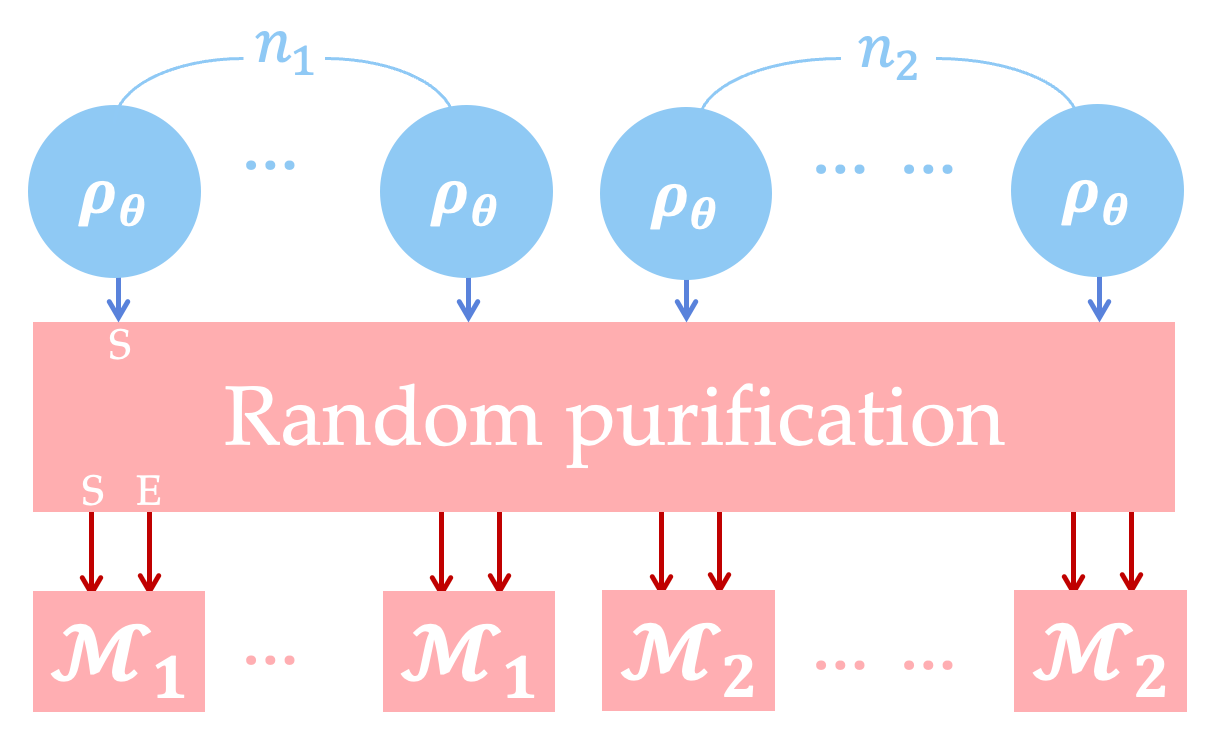}
    \caption{Measurement protocol. $S$ represents the probe system and $E$ represents the environmental system. $\dim(S)=d$ and $\dim(E) = r$, the rank of $\rho_\theta$. We first apply a random purification channel on $n=n_1+n_2$ copies of state $\rho_\theta$, which has circuit complexity that scales polynomially in $n$ and $\log(d)$. The outputs are their purifications with randomized environmental degrees of freedom. $\mM_1$ and $\mM_2$ correspond to a tomographic measurement, and a locally optimal measurement (for pure state estimation), respectively. }
    \label{fig:meas}
\end{figure}

The Cram\'{e}r--Rao bound (CRB)~\cite{rao1973linear,kay1993fundamentals,lehmann2006theory,cox2017inference,casella2002statistical} states that for any locally unbiased estimator at $\theta$ (i.e., estimator that is unbiased at $\theta$ and up to the first order in its vicinity), $V \succeq I(\rho_\theta, M)^{-1}$, where $A \succeq B$ means $A - B$ is positive semidefinite and $I(\rho_\theta, M)$ is the classical Fisher information matrix (CFIM). Given $n$ copies of $\rho_\theta$, the CRB is asymptotically attainable using the maximum likelihood estimator $\htheta^{(n)}_{\textsc{mle}}$~\cite{rao1973linear,kay1993fundamentals,lehmann2006theory,cox2017inference,casella2002statistical}---$\sqrt{n}(\hat{\theta}^{(n)}_{\textsc{mle}} - \theta)$ converges in distribution to a normal distribution centered around $0$ with variance $I(M)^{-1}$. 
The QCRB~\cite{helstrom1967minimum,helstrom1968minimum,helstrom1976quantum,holevo2011probabilistic,braunstein1994statistical,barndorff2000fisher,paris2009quantum} further states that, for any locally unbiased estimator at $\theta$, 
\begin{equation}
    V \succeq J(\rho_\theta)^{-1}, 
\end{equation}
where $J(\rho_\theta)$ is the QFIM. We assume $\theta$ is identifiable and $J(\rho_\theta) \succ 0$ within $\Theta$ throughout this work. 

For single-parameter estimation, the QFIM is equal to the CFIM optimized over all POVMs~\cite{braunstein1994statistical}. 
For multi-parameter estimation, however, this does not always hold. It helps to consider the \emph{weighted mean squared error} (WMSE), $\trace(W V)$, where $W \succeq 0$ is a cost matrix. The QCRB immediately provides a lower bound on the WMSE: $\trace(W V) \geq \trace(W J(\rho_\theta)^{-1}) =: C_{\rm F}(\rho_\theta,W)$. It can be expressed using the following optimization problem: 
\begin{equation}
       C_{\rm F}(\rho_\theta,W) = \min_{\vX = \{X_i\}_{i=1}^m} \trace(W \Re[Z(\vX)]) ,    
\end{equation}
where $X_i$ are Hermitian operators acting on $\mH_S$ satisfying $\forall i,j \in [m]$, 
\begin{equation}
\label{eq:cond}
    \trace\bigg(X_i \frac{\partial \rho_\theta}{\partial \theta_j}  \bigg) = \delta_{ij}, \quad Z(\vX)_{ij} = \trace(\rho_\theta X_i X_j). 
\end{equation} 
The HCRB~\cite{holevo2011probabilistic,kahn2009local,yamagata2013quantum,yang2019attaining} provides a tighter lower bound: 
\begin{multline}
     C_{\rm H}(\rho_\theta,W):=  \min_{\vX:=\{X_i\}_{i=1}^m} \!\trace(W \Re[Z(\vX)]) \\+  \big\|\sqrt{W}  \Im[Z(\vX)]  \sqrt{W}\big\|_1, 
\end{multline}
where $X_i$ are Hermitian operators satisfying \eqref{eq:cond}. 
It was known $C_{\rm F}(\rho_\theta,W) \leq C_{\rm H}(\rho_\theta,W) \leq 2 C_{\rm F}(\rho_\theta,W)$. The HCRB is asymptotically attainable through collective measurements acting on $\rho_\theta^{\otimes n}$ when the state and estimators satisfy certain regularity conditions~\cite{kahn2009local,yamagata2013quantum,yang2019attaining,tsang2026approaching}. However, the implementation and computational complexity of the optimal measurement strategy remain significant challenges. 

For pure states, the optimal measurements act locally on individual copies, and the computational complexity to find them scales polynomially with $d$~\cite{matsumoto2002new}. Moreover, if our goal is to attain twice the QCRB---where the factor of 2 is the optimal constant factor in general---we can apply individual Haar random measurements on each copy or two-design random measurements on every two copies, both choices independent of probe states. Finally, it is worth noting that using improved quantum state tomography~\cite{pelecanos2025debiased,pelecanos2025mixed}, it was recently shown there exists a locally unbiased estimator (as a function of $\theta$) that attains twice the QCRB for large $n$. However, the asymptotic convergence to twice the QCRB with no prior knowledge of $\theta$ is not directly implied.

\emph{Purification-based formulation of QCRB and HCRB.---}
Here we introduce our main theorem. We first introduce the purification of parameterized states we used in this work. Let $\rho_\theta = \sum_{i=1}^r \lambda_{i,\theta}\ket{e_{i,\theta}}\bra{e_{i,\theta}}$ be the diagonalization of $\rho_\theta$ where $r$ is the rank and the eigenvalues $\lambda_{i,\theta} > 0$ for $i\leq r$. We assume $r$ is invariant for $\theta \in \Theta$, and the positive spectrum of $\rho_\theta$ is \emph{non-degenerate} so that the diagonalization is well-defined. Then we consider its parameterized purification as 
\begin{equation}
        \ket{\psi_{\theta,\varphi}} := \sum_{j=1}^r  \sqrt{\lambda_{j,\theta}}\ket{e_{j,\theta}}_S \ket{f_{j,\varphi}}_E. 
\end{equation}
We use the subscripts $_S$ and $_E$ to denote the probe system and the environmental system. $\mH_E$ is a $r$-dimensional space and $\{ \ket{f_{j,\varphi}}\}_{j=1}^r$ is an orthonormal basis. $\varphi$ are called \emph{nuisance parameters} because our goal is not to estimate their values. In \thmref{thm:purification} below, we assume $\varphi \in \Omega$ where $\Omega$ is an open, bounded set and $\frac{\partial}{\partial \varphi_i} \ket{f_{j,\varphi}} = -i H_{i,\varphi}\ket{f_{j,\varphi}}$ where $\{H_{j,\varphi}\}_{j=1}^{r^2-1}$ form a basis of traceless Hermitian operators on $\mH_E$ for any $\varphi \in \Omega$. 

\begin{theorem}
\label{thm:purification}
For any rank-$r$ parameterized state $\rho_\theta$ with $m$ unknown parameters and a non-degenerate positive spectrum, and a cost matrix $W \in \bR^{m \times m}$ satisfying $W \succ 0$, 
\begin{gather}
    J(\rho_\theta)^{-1} = ( J(\psi_{\theta,\varphi})^{-1} )_{SS},\\
    C_{\rm H}(\rho_\theta,W) = C_{\rm H}(\psi_{\theta,\varphi},W^*).
\end{gather}
\end{theorem}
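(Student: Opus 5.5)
The plan is to reduce both identities to a single statement: the feasible sets of the two optimization problems (mixed-state on $S$, pure-state on $SE$ with nuisance parameters $\varphi$) can be put in correspondence so that the matrix $Z$ is preserved. Here $W^*$ is the $(m+r^2-1)$-dimensional cost matrix with block $W$ on the $\theta$ parameters and zeros on the nuisance block.

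\textbf{Step 1 (pure-state formulation with nuisance parameters).} For a pure state, the HCRB is the minimum over vectors $\ket{x_i}\in\mH_S\otimes\mH_E$, $i\in[m]$, of $\trace(W\Re[Z])+\|\sqrt W\Im[Z]\sqrt W\|_1$ with $Z_{ij}=\braket{x_i|x_j}$. This is the Matsumoto form, which I take as known. The constraints are
\begin{itemize}
\item $\Re\braket{x_i|\partial_{\theta_j}\psi}=\delta_{ij}/2$ (with the paper's normalization conventions);
\item $\Re\braket{x_i|\psi}=0$;
\item $\Re\braket{x_i|\partial_{\varphi_k}\psi}=0$ for all $k$.
\end{itemize}
The last condition arises because $W^*$ puts zero cost on the nuisance parameters. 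Minimizing over the free nuisance rows is a standard Schur-complement elimination, and it leaves exactly this orthogonality constraint.

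The same elimination applies to $\trace(W^* J(\psi)^{-1})=\trace(W(J(\psi)^{-1})_{SS})$, viewed as the minimum of $\trace(W\Re Z)$ over the same feasible set. Since this holds for every $W\succ 0$, the QCRB identity will follow once the feasible sets are matched with $Z$ preserved: the matrix identity is obtained by comparing the two minimizations over all $W$. (Alternatively, compute $(J^{-1})_{SS}$ as the inverse of the Schur complement of the nuisance block and identify it with $J(\rho_\theta)$ directly.)

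\textbf{Step 2 (parametrizing vectors by operators on $S$).} Since $\psi_{\theta,\varphi}$ has full Schmidt rank $r$ on $E$, every vector supported on $\mathrm{supp}(\rho_\theta)\otimes\mH_E$ is $(A\otimes\id)\ket\psi$ for some operator $A$ on $\mH_S$. Vectors with an $S$-component outside the support are likewise of this form with $A$ mapping the support out of it. So I will write $\ket{x_i}=(A_i\otimes\id)\ket\psi$.

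Using $\partial_{\varphi_k}\ket\psi=-i(\id\otimes H_k)\ket\psi$ with $\{H_k\}$ spanning the traceless Hermitian operators, the nuisance constraints read $\Im\bra\psi A_i^\dagger\otimes H_k\ket\psi=0$ for all $k$. Writing $\ket\psi$ as the matrix $K=\sum_j\sqrt{\lambda_j}\ket{e_j}\bra{\bar f_j}$, this says that the anti-Hermitian part of $K^\dagger A_i^\dagger K$ is proportional to $\id$. Combined with $\Re\braket{x_i|\psi}=0$, this gives $\sqrt{\rho}A_i\sqrt{\rho}$ Hermitian up to a trivial shift. Equivalently, $PA_iP$ is Hermitian, where $P$ is the projector onto $\mathrm{supp}(\rho_\theta)$.

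\textbf{Step 3 (bijection).} I would go in both directions.
\begin{itemize}
\item Given Hermitian $X_i$ feasible for the mixed problem, set $\ket{x_i}=(X_i\otimes\id)\ket\psi$. This satisfies the nuisance constraints, and
\[
\braket{x_i|x_j}=\trace(\rho_\theta X_iX_j),\qquad \Re\braket{x_i|\partial_{\theta_j}\psi}\propto\trace(X_i\partial_{\theta_j}\rho_\theta),
\]
the latter using $\partial\rho=\trace_E(\ket{\partial\psi}\bra\psi+\ket\psi\bra{\partial\psi})$.
\item Conversely, given feasible $A_i$, define $X_i=PA_iP+P^\perp A_iP+(P^\perp A_iP)^\dagger$. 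This is Hermitian by Step 2 and satisfies $(X_i\otimes\id)\ket\psi=(A_i\otimes\id)\ket\psi$, so it gives the same vector and the same $Z$.
\end{itemize}
Hence both feasible sets give the same set of $Z$'s, so the HCRB values agree and, by Step 1, so do the QCRB expressions.

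The main obstacle I expect is Step 2: carefully translating the nuisance orthogonality (together with the global-phase constraint $\Re\braket{x_i|\psi}=0$ and the missing identity direction among the $H_k$) into exact Hermiticity of $PA_iP$. Non-degeneracy and smooth differentiability of the purification are needed so that $\partial_\theta\psi$ is well defined; the constants in the derivative constraint should be checked against the paper's conventions.
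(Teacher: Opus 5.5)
Your route is the paper's: write each feasible vector as $(A_i\otimes\id)\ket{\psi}$, use the nuisance constraints to make $PA_iP$ Hermitian, then symmetrize the off-support block, as the paper does with $\tilde X_i$. However, the step you flagged contains a genuine error. The phase condition you impose, $\Re\braket{x_i|\psi}=0$, is the wrong component. Orthogonality to the global-phase tangent $i\ket{\psi}$ in the real inner product is $\Re\braket{x_i|i\psi}=-\Im\braket{x_i|\psi}=0$, and only this controls the missing identity direction.

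Here is why. With $B_i=K^\dagger A_i^\dagger K$, the nuisance constraints give $B_i-B_i^\dagger=ic_i\id_r$ with $c_i\in\bR$. Taking the trace yields $c_i r=2\,\Im\trace B_i=2\,\Im\braket{x_i|\psi}$, so $\Re\braket{x_i|\psi}=0$ leaves $c_i$ free.

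A concrete counterexample: take $\ket{y}=i(\rho_\theta^{-1}\otimes\id)\ket{\psi}$ (pseudo-inverse). It satisfies $\braket{y|\partial_{\varphi_k}\psi}=-\trace H_k=0$ and $\braket{y|\psi}=-ir$, so it passes all of your homogeneous constraints. Yet its unique representative has $PAP=i\rho_\theta^{-1}$, which is anti-Hermitian. Adding it is not a ``trivial shift'': it changes $\ket{x_i}$, hence $Z$, and it shifts the $\theta$-constraints by $2\sum_k\Im\braket{e_{k,\theta}|\partial_j e_{k,\theta}}$.

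Two consequences follow. First, the backward map in Step 3 is undefined on part of your feasible set. Second, your Step 1 program is not the HCRB of $\psi_{\theta,\varphi}$. Already its QCRB analogue evaluates to $\trace(W^*\tilde J^{-1})$ with $\tilde J_{ij}=4\Re\braket{\partial_i\psi|\partial_j\psi}=J(\psi_{\theta,\varphi})_{ij}+4a_ia_j$, where $a_j=\Im\braket{\psi|\partial_j\psi}$. This value depends on the phase convention for the $\ket{e_{j,\theta}}$. A common $\theta$-dependent rephasing leaves $\rho_\theta$ and $J(\psi_{\theta,\varphi})$ unchanged but shifts $a$. So the value is in general strictly smaller than $\trace(W^*J(\psi_{\theta,\varphi})^{-1})$.

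The repair is to impose $\Im\braket{x_i|\psi}=0$, i.e.\ $\braket{\psi|x_i}\in\bR$. This characterizes exactly the vectors $X_i\ket{\psi}$ with $X_i$ Hermitian on $\mH_S\otimes\mH_E$, so Step 1 becomes a literal rewriting of the operator HCRB. Alternatively, use Matsumoto's $\braket{x_i|\psi}=0$; then in the forward direction of Step 3 first replace $X_i$ by $X_i-\trace(\rho_\theta X_i)\id$, which preserves the derivative constraints and only lowers $\Re Z$.

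With the correct condition, $c_i=0$ and $K^\dagger A_iK$ is exactly Hermitian. Hence so is $PA_iP$, since $K$ maps $\bC^r$ bijectively onto the support. No shift is needed, and your Step 3 goes through as written. This is precisely how the paper closes the argument: the nuisance constraints give $\Pi(i\tilde X_i-i\tilde X_i^\dagger)\Pi=a_i\rho_\theta^{-1}$ in \eqref{eq:herm}, and the reality of $\bra{\psi_\phi}X_i\ket{\psi_\phi}$ then forces $a_i=0$.

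Also, Step 1 needs no Schur-complement elimination. Since $\sqrt{W^*}$ vanishes outside the $\theta$ block, neither term of the objective involves the nuisance members. One only needs that they can be completed feasibly, which follows from linear independence of the $\partial_j\psi_\phi$.
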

\noindent Here we abuse the notation a bit and use $_S$ to denote the matrix indices corresponding to $\theta$. $J(\psi_{\theta,\varphi})$ and $W^* \in \bR^{(m + r^2 -1) \times (m + r^2 -1)}$, where the first $m$ indices correspond to $\theta$ and the rest correspond to $\varphi$. $J(\psi_{\theta,\varphi})_{SS} \in \bR^{m  \times m}$ is equal to $J(\psi_{\theta,\varphi})$ when restricted to the first $m$ indices; and $W^* \in \bR^{m+r^2-1 \times m+r^2-1}$ is equal to $W$ in the upper-left block and zero in the rest.

To simplify the notation, let $m^* = m + r^2 -1$ and 
\begin{equation}
\phi := (\theta,\varphi) \in \bR^{m^*}
\end{equation}
represent all unknown parameters in the parameterized purification $\psi_{\phi} := \psi_{\theta,\varphi}$. 

\begin{proof}
We first prove $J(\rho_\theta)^{-1} = ( J(\psi_{\theta,\varphi})^{-1} )_{SS}$. We only need to show $C_{\rm F}(\rho_\theta,W) = C_{\rm F}(\psi_{\phi},W^*)$ for any $W \in \bR^{m\times m}$. 
Note that 
    \begin{equation}
        C_{\rm F}(\psi_{\phi},W^*) = \min_{\vX = \{X_i\}_{i =1}^{m^*}} \trace(W^* \Re[Z(\vX)]) .
    \end{equation}
    where $Z(\vX)_{ij} = \trace(\psi_\phi X_i X_j)$ and $X_i \in \bC^{dr \times dr}$ are Hermitian operators satisfying $\forall i,j \in [m^*]$, 
    \begin{equation}
\label{eq:cond-2}
    \trace\bigg(X_i \frac{\partial \psi_\phi}{\partial \phi_j}  \bigg) = \delta_{ij}.
\end{equation} 

First, we show $\trace(W^*  J(\psi_{\phi})^{-1}) \leq \trace(W J(\rho_\theta)^{-1})$. We need to show first any $\{X_i\}_{i\in[m]}$ satisfying \eqref{eq:cond} can be turned into $\{X^\diamond_i\}_{i \in [m^*]}$ satisfying \eqref{eq:cond-2}. For $i \in [m]$, $(X_i)_S \otimes \id_E = (X^\diamond_i)_{SE}$. Note that $\trace((X_i \otimes \id_E)\frac{\partial \psi_\phi}{\partial \theta_j}) = \trace(X_i \frac{\partial \rho_\theta}{\partial \theta_j}) = \delta_{ij}$ and $\trace((X_i \otimes \id_E)\frac{\partial \psi_\phi}{\partial \varphi_j}) = \trace(X_i \frac{\partial \rho_\theta}{\partial \varphi_i}) = 0$.  The rest of the Hermitian operators $\{(X^\diamond_i)_{SE}\}_{i \in [m^*]\backslash[m]}$ can be chosen as any solution satisfying \eqref{eq:cond-2}, which always exists because $\{ \psi_\phi \}\cup \{\frac{\partial \psi_\phi}{\partial \phi_j}\}_{j\in[m^*]}$ are linearly independent. Finally, we note that $Z(\vX^\diamond)_{SS} = Z(\vX)$, which proves $\trace(W^*  J(\psi_{\phi})^{-1}) \leq \trace(W J(\rho_\theta)^{-1})$. 

Next, we show $\trace(W^*  J(\psi_{\phi})^{-1}) \geq \trace(W J(\rho_\theta)^{-1})$. We need to show first any $\{X_i\}_{i\in[m^*]}$ satisfying \eqref{eq:cond-2} can be turned into $\{X_i\}_{i \in [m]}$ satisfying \eqref{eq:cond}. 
Assume Hermitian $X_i \in \bC^{dr \times dr}$ satisfies \eqref{eq:cond-2}. Let $\tilde{X}_i \in \bC^{d \times d}$ be operators satisfying  
\begin{equation}
    (\tilde{X}_i \otimes \id_E) \ket{\psi_{\phi}} = X_i \ket{\psi_{\phi}}, \label{eq:tilde-x}
\end{equation}
for all $i \in [m]$, which are not necessarily Hermitian at this moment. $\tilde{X}_i$ always exists because the Schmidt rank of $\psi_{\phi}$ is $r$. Let $\Pi$ be the projection onto the support of $\rho_\theta$ and $\Pi_\perp = \id - \Pi$. 
We first notice we can always assume $\tilde{X_i}$ satisfies $\Pi_\perp  \tilde{X_i} \Pi_\perp = 0$ and $\Pi \tilde{X_i} \Pi_\perp = ( \Pi_\perp \tilde{X_i} \Pi )^\dagger$ because \eqref{eq:tilde-x} is independent of any change in $\tilde{X_i}\Pi_\perp$ and we can choose this part freely. 

Below we use the vectorization formalism $\dket{\star} = \sum_{j\in[d],k\in[r]} (\star) \ket{e_{j,\theta}}\ket{f_{k,\varphi}}$ (although only $\{\ket{e_{j,\theta}}\}_{i\in[r]}$ was defined but it can be easily enlarged to be an entire orthonormal basis). Then $\ket{\psi_{\phi}} = \dket{R}$ for $R \in \bR^{d\times r}$ where $R_{ij} = \sqrt{\lambda_{i,\theta}}\delta_{ij}$ when $i \in [r]$ and $R_{ij} = 0$ when $i \in [d]\backslash[r]$. Note that 
    \begin{align}
    \frac{\partial \psi_{\phi}}{\partial \varphi_j} 
    &= (-i \id_S \otimes H_{j})\ket{\psi_{\phi}}\!\bra{\psi_{\phi}} + \ket{\psi_{\phi}}\!\bra{\psi_{\phi}}(i \id_S \otimes H_{j})\nonumber \\
    &= -i \dket{RH_{j}^\top }\dbra{R} + i \dket{R}\dbra{RH_{j}^\top } . 
    \end{align}
    From $\trace(X_i \frac{\partial \psi_{\phi}}{\partial \varphi_j}) = 0$, 
    \begin{gather}
    -i \trace(  R^\top \tilde{X}_i^\dagger R H_{j}^\top  ) + i \trace( R^\top \tilde{X_i} R H_{j}^\top  ) = 0,\\
    \trace( (-i \tilde{X}_i^\dagger + i \tilde{X}_i) R H_{j}^\top R^\top ) = 0. 
    \end{gather}
    Here $\big\{R H_{j}^\top  R^\top\big\}_{j=1}^{r^2-1}$ span the entire space of Hermitian operators in subspace $\Pi$ that is perpendicular to $\rho_\theta^{-1}$ (here $^{-1}$ means pseudo-inverse). We must have 
    \begin{equation}
    \Pi (-i \tilde{X}_i^\dagger + i \tilde{X}_i) \Pi = a_i \rho_\theta^{-1},\quad \text{ for some } a_i \in \bR.  \label{eq:herm}        
    \end{equation}
    From \eqref{eq:tilde-x}, $
        \bra{\psi_\phi}(\tilde{X}_i \otimes \id_E) \ket{\psi_{\phi}} = \bra{\psi_{\phi}} X_i \ket{\psi_{\phi}} \in \bR$. 
    It implies we must have $a_i = 0$ and then $\tilde{X}_i$ is Hermitian. Note that $\trace\big(X_i \frac{\partial \psi_{\phi}}{\partial \theta_j}\big) = \trace\big(\tilde{X}_i \frac{\partial \rho_{\theta}}{\partial \theta_j}\big) = \delta_{ij}$. Therefore, we have constructed a set of Hermitian operators $\tilde{X_i}$ that satisfies \eqref{eq:cond} from $X_i$ satisfying \eqref{eq:cond-2}. Finally, we note that $Z(\tilde\vX) = Z(\vX)_{SS}$, which proves $\trace(W^*  J(\psi_{\phi})^{-1}) \geq \trace(W J(\rho_\theta)^{-1})$.  

    We have shown above $C_{\rm F}(\rho_\theta,W) = C_{\rm F}(\psi_{\phi},W^*)$ for any $W \in \bR^{m\times m}$ which implies $J(\rho_\theta)^{-1} = ( J(\psi_{\theta,\varphi})^{-1} )_{SS}$. $C_{\rm H}(\rho_\theta,W) = C_{\rm H}(\psi_{\phi},W^*)$ also holds true straightforwardly because the HCRB can also be written as an optimization over Hermitian operators satisfying \eqref{eq:cond}, $Z(\vX^\diamond)_{SS} = Z(\vX)$ when $\vX$ satisfies \eqref{eq:cond} and $Z(\tilde{\vX}) = Z(\vX)_{SS}$ when $\vX$ satisfies \eqref{eq:cond-2}. 
    \end{proof}

Note that purification-based techniques were widely applied before, such as in state tomography~\cite{pelecanos2025mixed}, testing~\cite{chen2024local}, single-parameter metrology~\cite{fujiwara2008fibre,escher2011general,demkowicz2012elusive}, and also in multi-parameter quantum metrology~\cite{albarelli2022probe,pelecanos2025mixed}. However, our nuisance-parameter treatment is distinct from previous applications. 

\emph{Attaining the HCRB and twice the QCRB.---}
Here we present an efficient measurement strategy that attains the HCRB or twice the QCRB.  
\begin{theorem}
\label{thm:hcrb}
Given $n$ copies of a rank-$r$ parameterized state $\rho_\theta$ with $m$ unknown parameters $\theta \in \Theta$ (an open and bounded set) satisfying regularity conditions\footnote{We require the following regularity conditions: (1)~$\rho_\theta$ has a non-degenerate positive spectrum. (2)~The minimum non-zero eigenvalue of $\rho_\theta$ is above some positive constant; (3)~The minimum (maximum) eigenvalue of $J(\rho_\theta)$ is above (below) some positive constant; (4)~There is another open bounded set $\Theta^\diamond$ such that $\Theta \Subset \Theta^\diamond$, i.e. $\Theta \subseteq \overline{\Theta} \subseteq \Theta^\diamond$ where $\overline{\Theta}$ is the closure of $\Theta$, and the definition of $\rho_\theta$ along with all requirements on it (continuous differentiability, identifiability, and regularity conditions) extends to $\{\rho_\theta\}_{\theta \in \Theta^\diamond}$. 
}, 
a cost matrix $W \in \bR^{m \times m}$ satisfying $W \succ 0$, a measurement protocol using random purification channels and individual measurements yields an asymptotically unbiased estimator $\htheta^{(n)}$\footnote{An asymptotically unbiased estimator is an estimator satisfying $\forall \theta \in \Theta,\lim_{n\rightarrow \infty} \bE[\htheta^{(n)}] = \theta$.}, such that the WMSE satisfies $\forall \theta \in \Theta$, 
\begin{gather}
\label{eq:hcrb-cov}
    \lim_{n\rightarrow \infty}n \trace(WV(\htheta^{(n)})) = C_{\rm H}(\rho_\theta,W). 
\end{gather}
Switching to different individual measurements yields another asymptotically unbiased estimator such that the MSEM satisfies $\forall \theta \in \Theta$
\begin{equation}
\label{eq:qcrb-cov}
       \lim_{n\rightarrow \infty}n V(\htheta^{(n)}) = 2 J(\rho_\theta)^{-1}.  
\end{equation}
\end{theorem}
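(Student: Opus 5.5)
The plan is to reduce the problem to pure-state metrology through the random purification channel and then invoke \thmref{thm:purification}. First apply the random purification channel to $\rho_\theta^{\otimes n}$. By its defining property, the output equals $\int d\mu(U)\,\big((\id_S\otimes U_E)\ket{\psi_{\theta,\varphi^0}}\!\bra{\psi_{\theta,\varphi^0}}(\id_S\otimes U_E^\dagger)\big)^{\otimes n}$, with $U$ Haar random on $\mH_E$. Conditioned on $U$, this is $n$ i.i.d.\ copies of a pure state $\psi_{\theta,\varphi}$ with an unknown environmental unitary. Locally, this unitary is parameterized by $\varphi$ through the generators $H_{j,\varphi}$; the global phase is irrelevant, which is why $r^2-1$ generators suffice.

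Since $U$ is the same across all copies, I will treat $(\theta,\varphi)=\phi$ as a fixed unknown parameter of a pure-state model with $m^*$ parameters. I will cover $U(r)$ by finitely many charts $\Omega$ so that the $\varphi$-parameterization is regular, and I will keep $\Theta\Subset\Theta^\diamond$ so that coarse estimates stay inside the extended domain.

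The protocol then has two stages, as in \figref{fig:meas}. In stage one I use $n_1=n^{\alpha}$ copies with $\alpha\in(1/2,1)$, say $n_1=n^{2/3}$. On these I perform a tomographic measurement $\mM_1$ of the pure state $\psi_\phi$, for instance a fixed informationally complete POVM. This yields $\hat\psi$ with error $O(n_1^{-1/2}\log n)$ except with probability decaying faster than any polynomial. I will then invert the map $\phi\mapsto\psi_\phi$ to obtain a coarse estimate $\check\phi=(\check\theta,\check\varphi)$. The inversion uses identifiability of $\theta$, the full-rank generators, and regularity condition~(3), together with the Jacobian bounds from conditions~(2) and~(3), which make the map locally bi-Lipschitz. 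The chart for $\varphi$ is then chosen around $\check\varphi$.

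In stage two I use the remaining $n_2=n-n_1$ copies for the second-stage measurement. For \eqref{eq:hcrb-cov}, I apply on each copy the individual measurement $\mM_2$ that attains the pure-state HCRB $C_{\rm H}(\psi_{\check\phi},W^*)$ for the degenerate weight $W^*$, following Matsumoto~\cite{matsumoto2002new}, and pair it with the locally unbiased linear estimator around $\check\phi$. Here $W^*$ is only positive semidefinite, so I will use $W^*+\epsilon\id$ and let $\epsilon\to0$ slowly, or observe that Matsumoto's construction still applies. For \eqref{eq:qcrb-cov}, I instead use Haar or two-design random measurements, whose Fisher information for pure states is $J(\psi_\phi)/2$, and take the MLE or a one-step Newton correction from $\check\phi$. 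This yields $n_2V_{\phi\phi}\to 2J(\psi_\phi)^{-1}$. Restricting to the $SS$ block and applying \thmref{thm:purification} gives $2J(\rho_\theta)^{-1}$ and $C_{\rm H}(\rho_\theta,W)$ respectively, since $\trace(W^*V_{\phi\phi})=\trace(WV_{\theta\theta})$. Because $n_2/n\to1$, the limits hold with $n$ in place of $n_2$.

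The main obstacle is the error analysis of the adaptive local estimator. The stage-two measurement is optimal at $\check\phi$ rather than at $\phi$, so I must show that the bias and the excess MSE are $o(1/n)$. The strategy is to Taylor expand $\psi_\phi$ around $\check\phi$ to second order, using three-times differentiability. Local unbiasedness cancels the first-order term, and the second-order bias is $O(\|\check\phi-\phi\|^2)=O(n_1^{-1}\log^2 n)$. Its square is $o(1/n)$ because $n_1\gg\sqrt n$. The variance part is continuous in $\check\phi$, so it converges to its value at $\phi$.

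Two further points need care. First, the rare failure event of stage one contributes negligibly because $\Theta$ is bounded; I will clip estimates to $\overline{\Theta}$. Second, the Haar average over $U$ poses no difficulty, since every bound is uniform in $\varphi$ by compactness of $U(r)$ and invariance of the $\theta$-block. Asymptotic unbiasedness then follows from the same bias bound.
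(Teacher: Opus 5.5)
\textbf{Verdict.} Your HCRB half is essentially the paper's argument: random purification, a tomographic Stage~1 with $\sqrt n\ll n_1=o(n)$, a minimum-distance fit with a chart centred at $\check U$, Matsumoto's individual measurement at $\check\phi$ with the averaged locally unbiased estimator, bias $O(\|\Delta\phi\|^2)$, and then \thmref{thm:purification}. The gap is in the QCRB half, and it has two parts.

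\textbf{The measurement claim.} A finite single-copy 2-design is not Fisher-symmetric for all pure states. The identity $I=\tfrac12 J$ holds for the Haar POVM $\{dr\,\ket{u}\!\bra{u}\,du\}$ because the phase of $\braket{u|\psi}$ is independent of the rest of $u$, which uses full unitary invariance. It fails, for example, for the qubit SIC near states antipodal to a SIC vector.

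\textbf{The estimator.} Even for the Haar POVM, your estimator lacks the properties your error analysis needs. The density $p(u|\check\phi)=dr\,|\braket{u|\psi_{\check\phi}}|^2$ vanishes on a hypersurface, and near it the score $\nabla\log p(u|\check\phi)$ grows like $|\braket{u|\psi_{\check\phi}}|^{-1}$. When $\phi\neq\check\phi$, the true state puts density of order $\|\Delta\phi\|^2$ on that hypersurface. Since $|\braket{u|\psi_{\check\phi}}|^2$ has a Beta$(1,dr-1)$ law with positive density at $0$, $\bE_\phi\|\nabla\log p(u|\check\phi)\|^2$ diverges logarithmically: one meets $\int_0 dx/x$.

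Hence the unclipped one-step Newton estimator has infinite MSE whenever Stage~1 is not exact. Clipping restores finiteness but not your argument. Both steps you rely on break down, because they need per-copy estimators bounded uniformly in $\check\phi$:
\begin{itemize}
\item Taylor remainders bounded by $O(\|\Delta\phi\|^2)$;
\item ``the variance is continuous in $\check\phi$''.
\end{itemize}
The MLE option is not covered by your analysis either. Asymptotic normality gives only convergence in distribution. Upgrading it to $nV\to 2J(\rho_\theta)^{-1}$ and to asymptotic unbiasedness needs a separate uniform-integrability argument, which the same heavy-tailed, non-dominated log-likelihood derivatives make delicate.

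\textbf{The missing idea.} The paper keeps Stage~2 adaptive for the QCRB too. Each copy is measured with the $(2dr-1)$-outcome rank-one Fisher-symmetric measurement of Li \emph{et al.} constructed at $\check\phi$. This measurement has $I(\psi_{\check\phi},M)=\tfrac12 J(\psi_{\check\phi})$ and gives every outcome probability exactly $1/(2dr-1)$ at $\check\phi$. The locally unbiased estimator $\check\phi+I(\psi_{\check\phi},M)^{-1}\nabla p_{\bar\ell,\check\phi}/p_{\bar\ell,\check\phi}$ is then bounded uniformly in $\check\phi$, using the uniform bounds $\epsilon\,\id\preceq J(\psi_\phi)\preceq L\,\id$. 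With that, the same bias/variance Taylor argument as in your HCRB half goes through unchanged.

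\textbf{The same issue in your HCRB half.} Matsumoto's estimator $\braket{e_{\bar\ell}|x_i}/\braket{e_{\bar\ell}|\psi_{\check\phi}}$ is uniformly bounded only after two choices the paper makes. The basis must have all overlaps $\braket{e_{\bar\ell}|\psi_{\check\phi}}$ equal. And $\braket{x_i|x_i}$ must be bounded via $C_{\rm H}\le 2C_{\rm F}$ and the regularity conditions. Appealing to compactness is not enough, because the optimal measurement need not vary continuously with $\check\phi$.
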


Before proving the theorem, we first give an informal description of our measurement protocol. Here random purification channels are channels that purify multiple copies of mixed states with the environmental degrees of freedom randomized, i.e.~\cite{tang2025conjugate,girardi2025random,walter2025random,pelecanos2025mixed},  
\begin{equation}
    \Phi_n^{d,r}(\rho_S^{\otimes n}) = \bE_{U_E} [\big( (\id_S \otimes U_E) \psi_{SE} (\id_S \otimes U_E^\dagger) \big)^{\otimes n}],
\end{equation}
where $\bE_{U_E}$ denotes expectation with respect to the Haar measure on the group of $r \times r$ unitary matrices and $\psi_{SE}$ is an arbitrary purification of $\rho_S$. Our measurement protocol requires first applying a global random purification channel on all $n$ copies of state $\rho_{\theta}$. 

Afterwards, we use a two-stage method. In Stage~1, we take $n_1 \ll n$ copies of the output pure state and perform state tomography. We perform measurements on individual copies of states. For concreteness, we take a 3-design measurement $\{M_{\bar{s}} = q_{\bar{s}} \ket{\bar{s}}\!\bra{\bar{s}}\}$ where $q_{\bar{s}}$ is a normalization constant, following from the classical shadow learning strategy. 
In principle, any individual measurement that provides an \emph{unbiased, bounded state estimator} is sufficient for our purpose, as can be seen from our proof below. It is because in the asymptotic limit $n_1/n \rightarrow 0$, the contribution from Stage~1 will be negligible and the measurement choice is not important. Stage~1 provides a rough estimate of $\phi = (\theta,\varphi)$, which we denote by $\cphi = (\ctheta,\cvarphi)$. 

In Stage~2, to attain the HCRB, we take the other $n_2 = n-n_1$ copies of the output pure state and perform the optimal locally unbiased estimation at $\cphi$ given by Matsumoto's HCRB~\cite{matsumoto2002new}. It provides an optimal rank-one individual measurement (depending on $\cphi$) attaining the HCRB for pure states, denoted by $\{M_{\bar{\ell}} = \ket{b_{\bar{\ell}}}\!\bra{b_{\bar{\ell}}}\}$ (for notation simplicity, $\ket{b_{\bar{\ell}}}$ is unnormalized). Stage~1 guarantees the closeness between $\phi$ and $\cphi$, which in turn ensures a bounded estimation bias and WMSE in Stage~2 that converge to $0$ and $1/n\cdot C_{\rm H}(\psi_{\theta,\varphi},W^*)$ as $n \rightarrow \infty$, the latter equal to $1/n\cdot C_{\rm H}(\psi_{\theta},W)$ from \thmref{thm:purification}. To attain twice the QCRB, we instead perform a different locally optimal measurement, called Fisher-symmetric measurement, in Stage~2 that obtain a CFIM equal to half the QFIM (at $\phi = \cphi$)~\cite{li2016fisher}. The MSEM for $\theta$-estimation converges to $2/n \cdot (J(\psi_{\theta,\varphi})^{-1})_{SS}$, which is equal to $2/n \cdot J(\rho_{\theta})^{-1}$ from \thmref{thm:purification}. 

Below we prove \thmref{thm:hcrb}. $\norm{\cdot}_{p}$ represents the Schatten-$p$ norm of matrices and $\norm{\cdot} = \norm{\cdot}_\infty$. 
\begin{proof}
We first prove \eqref{eq:hcrb-cov}. 
Let $\ket{\psi_{\theta,U}} := \sum_{j=1}^r  \sqrt{\lambda_{j,\theta}}\ket{e_{j,\theta}}_S U\ket{j}_E$, where $\ket{j}$ is any fixed orthonormal basis in $E$. The output state after the random purification channel is $\bE_U [\psi_{\theta,U}^{\otimes n}]$. 

In Stage~1, we perform 3-design measurements $\{q_{s_k}\ket{s_k}\!\bra{s_k}\}$ on the $k$-th copy for $k \in [n_1]$, with measurement outcome $s = (s_1,\ldots,s_{n_1})$. We also let $q_s = \prod_{k=1}^{n_1} q_{s_k}$. The state on the remaining $n_2$ registers is $\rho_{s|\theta}$ and 
\begin{equation}
    \prob(s|\theta) \rho_{s|\theta} = \bE_{U} [ \braket{s|\psi_{\theta,U}^{\otimes n_1}|s} \psi_{\theta,U}^{\otimes n_2} ]. 
\end{equation}
We then parameterize the environmental system and design Stage~2, according to the outcome $s$. Let $\cpsi(s) = \frac{1}{n_1}\sum_{k=1}^{n_1} \cpsi^{(k)}(s_k) = \frac{1}{n_1}\sum_{k=1}^{n_1}((dr+1)\ket{s_k}\!\bra{s_k} - \id)$ be the classical shadow state estimator---which is not necessarily pure or positive. 
Then we take $(\ctheta,\cU)$ to be one satisfying 
\begin{equation}
\textstyle
\label{eq:dist-1}
    \big\| \psi_{\ctheta,\cU} - \cpsi(s) \big\|_1 \leq 2\inf_{\theta,U}  \big\|  \psi_{\theta,U} - \cpsi(s) \big\|_1. 
\end{equation}
Note that $\ctheta,\cU$ also depend on $s$ but we do not show it explicitly for notation simplicity.

\begin{figure}[tb]
    \centering
    \includegraphics[width=0.6\linewidth]{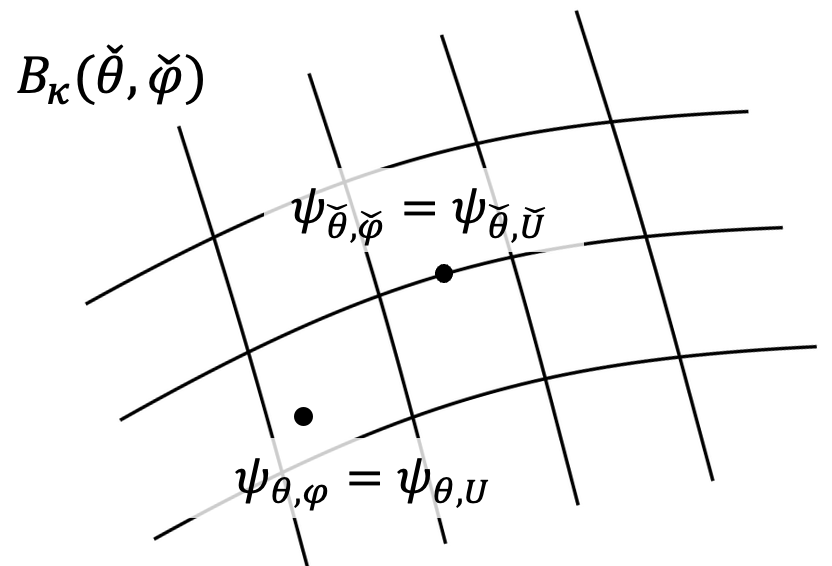}
    \caption{When the distance between the rough estimate ${\psi_{\ctheta,\cU}} =: {\psi_{\ctheta,\cvarphi}}$ and the true state ${\psi_{\theta,U}}$ is sufficiently small, there is a $\kappa$-ball $B_\kappa(\cphi)$ centered at $\cphi = (\ctheta,\cvarphi)$ such that inside $B_\kappa(\cphi)$, any $\psi_{\theta,U}$ can be represented by $\psi_{\theta,\varphi}$ for a unique choice of $\varphi$ and $\psi_{\theta,\varphi}$ is an injective function. }
    \label{fig:mesh}
\end{figure}

We then introduce the following parameterization of the environmental system: 
\begin{equation*}\textstyle
    \ket{\psi_{\theta,\varphi}} := \sum_{j=1}^r  \sqrt{\lambda_{j,\theta}}\ket{e_{j,\theta}}_S \exp(-i\sum_{i=1}^{r^2-1} \varphi_i H_i)\cU\ket{j}_E,
\end{equation*}
where $H_i$ is some fixed basis of traceless Hermitian operators. Let $\cvarphi = 0$. Then we choose the optimal locally unbiased estimator $\htheta^{(k)}(s,\ell_k)$ from Matsumoto's HCRB. $\ket{b_{\ell_k}}\bra{b_{\ell_k}}$ is the rank-one measurement operator and $\ell_k$ is the measurement outcome on the $k$-th copy (among the $n_2$ copies). $\ell = (\ell_{1},\ldots,\ell_{n_2})$. We take the final estimator of $\theta$ to be $\htheta(s,\ell) = \frac{1}{n_2}\sum_{k=1}^{n_2}\htheta^{(k)}(s,\ell_k)$. 
Then the MSEM is
\begin{align*}
\textstyle
    V_{ij} &= \!\sum_{s,\ell} q_s (\htheta(s,\ell)_i-\theta_i) (\htheta(s,\ell)_j-\theta_j) \prob(s|\theta)\! \bra{b_\ell}\!\rho_{s|\theta}\!\ket{b_\ell}\\
    &= \!\sum_{s,\ell} q_s (\htheta(s,\ell)_i-\theta_i) (\htheta(s,\ell)_j-\theta_j)  \\ &\qquad \qquad \qquad \qquad\qquad \cdot \bE_{U} [ \braket{s|\psi_{\theta,U}^{\otimes n_1}|s} \!\bra{b_\ell}\!\psi_{\theta,U}^{\otimes n_2}\!\ket{b_\ell} ],
\end{align*}
where $\ket{b_\ell} = \bigotimes_{k=1}^{n_2}\ket{b_{\ell_k}}$. 

Let $T_{U,s,n_1}$ denote an event where 
\begin{equation}
\label{eq:dist-2}
\big\| \psi_{\theta,U} - \cpsi(s) \big\|_1 \leq 1/n_1^{1/2-\delta/2}
\end{equation}
for some small $\delta$, e.g. we can take $\delta = 0.1$. We analyze the MSEM according to whether this event occurs. First, we note that 
\begin{align*}
    1 - \prob(T_{U,s,n_1}) 
    &\leq \prob( dr \|\psi_{\theta,U} - \cpsi(s)\| \geq 1/n_1^{1/2-\delta/2})\\
    & \leq 2dr \exp\big( - {n_1^{\delta}}/(8(dr)^2(dr+1)^2)\big),
\end{align*}
where we use the Matrix Hoeffding's inequality~\cite{tropp2012user} for any random $dr \times dr$ Hermitian matrices $Q_k$ with $\bE[Q_k] = 0$: 
\begin{equation}
\textstyle
    \prob(\|\sum_k Q_k\| \geq t) \leq 2dr e^{-t^2/8q^2},\;\norm{\sum_k Q_k^2}\leq q^2, 
\end{equation}
$\bE_{s_k}[\cpsi^{(k)}(s_k)] = \psi_{\theta,U}$, and $\big\| \psi_{\theta,U} - \cpsi^{(k)}(s_k)) \big\| \leq dr + 1$.

Next, when $T_{U,s,n_1}$ occurs, we first justify the validity of the parameterization step. We show~\cite{SM}, there exist universal constants $\kappa, \eta > 0$, such that $1 - |\!\braket{\psi_{\theta,U}|\psi_{\ctheta,\cU}}\!|^2 \leq \eta$ implies $\psi_{\theta,U}$ can be represented as $\psi_\phi$ by some $\phi \in B_{\kappa}(\cphi):=\{\phi|\|\phi-\cphi\| < \kappa\}$ and $\psi_{\phi}$ on $B_{\kappa}(\cphi)$ is an injective map. See \figref{fig:mesh}. 
Conditioned on $T_{U,s,n_1}$ occurring, from \eqref{eq:dist-1}, \eqref{eq:dist-2} and the triangle inequality, we have $\|\ket{\psi_{\theta,U}} - \ket{\psi_{\ctheta,\cU}}\|_1 \leq  3 n_1^{-1/2+\delta/2}$ which implies $1 - |\!\braket{\psi_{\theta,U}|\psi_{\ctheta,\cU}}\!|^2  \leq \frac{9}{4} n_1^{-1+\delta}$. 
For sufficiently large $n_1$, $\frac{9}{4} n_1^{-1+\delta} < \eta$, and $(\theta,\varphi)$ will then be identifiable within $B_\kappa(\cphi)$. 
In addition, for large enough $n_1$, $\norm{\Delta\phi} = O(n_1^{-1/2+\delta/2})$ that converges uniformly independent of $\phi$ and $\cU$~\cite{SM}. 
Next, we calculate $( V |_{T_{U,s,n_1}})_{ij} := 
\sum_{\ell} (\htheta(s,\ell)_i-\theta_i) (\htheta(s,\ell)_j-\theta_j)  \!\bra{b_\ell}\!\psi_{\theta,U}^{\otimes n_2}\!\ket{b_\ell}$. 
$\htheta^{(k)}(s,\ell_k)_i$ is bounded, and satisfies~\cite{SM} 
\begin{equation}
\label{eq:sm-1}
    \bE_{\ell_k} [\htheta^{(k)}(s,\ell_k)_i] = \theta_i + O(\|\Delta\phi\|^2),
\end{equation}
where the bias is due to the deviation of the rough estimate $\cphi$ from the true value of $\phi$. We also have~\cite{SM} 
\begin{multline}
\label{eq:sm-2}
    \trace( W  V |_{T_{U,s,n_1}}) = \\ \frac{1}{n_2}\big(C_{\rm H}(\psi_{\ctheta,\cvarphi},W^*) + O(\|\Delta\phi\|)\big) + O(\|\Delta\phi\|^4). 
\end{multline}

Finally, $
\trace(W V) = \trace( W  V |_{T_{U,s,n_1}} ) \prob(T_{U,s,n_1})  + O(\prob(T^c_{U,s,n_1}))$, 
where the first and second terms correspond, respectively, to the cases where $T_{U,s,n_1}$ occurs and where it does not (represented by $T^c_{U,s,n_1}$). The second term is bounded by $O(\prob(T^c_{U,s,n_1}))$ because our estimator is bounded~\cite{SM}. 
Take $n_1 = n^{{2}/{(3(1-\delta))}}$ and $n_2 = n - n_1$, we have 
\begin{equation*}
    \lim_{n\rightarrow \infty} n \trace(W V) = \lim_{\substack{\ctheta\rightarrow \theta\\\cvarphi\rightarrow\varphi}} C_{\rm H}(\psi_{\ctheta,\cvarphi},W^*) = C_{\rm H}(\rho_{\theta},W). 
\end{equation*}
We use the continuity of $C_{\rm H}(\psi_{\ctheta,\cvarphi},W^*)$~\cite{SM} in the second step. The asymptotic unbiasedness follows from $\bE[\htheta-\theta] = O(\norm{\Delta\phi}^2) + O(\prob(T^c_{U,s,n_1})) \rightarrow 0$.

Above, we have proved \eqref{eq:hcrb-cov}. In order to prove \eqref{eq:qcrb-cov}, we notice that it is sufficient to show there exists another asymptotically unbiased estimator such that 
$\forall W \succ 0$, $\lim_{n\rightarrow \infty} n \trace(W V) = 2 C_{\rm F}(\rho_{\theta},W)$. 
Our measurement strategy is exactly the same as previous, except in Stage 2, we need to switch to another rank-one measurement $M$~\cite{li2016fisher} that achieves $I(\psi_{\ctheta,\cvarphi},M) = \frac{1}{2} J(\psi_{\ctheta,\cvarphi})$. \eqref{eq:sm-1} and \eqref{eq:sm-2} (with $C_{\rm H}(\psi_{\ctheta,\cvarphi},W^*)$ replaced by $2C_{\rm F}(\psi_{\ctheta,\cvarphi},W^*)$) still hold true for the new measurement because they only require the local unbiasedness condition for $\htheta^{(k)}(s,\ell_k)_i$ (and $I(\psi_{\ctheta,\cvarphi},M) = \frac{1}{2} J(\psi_{\ctheta,\cvarphi})$). \eqref{eq:qcrb-cov} is then proved because all other proof steps follow exactly as above. 
\end{proof}

% One special treatment in our proof is we do not introduce nuisance parameters until Stage~2. It is because sufficient prior knowledge is required to guarantee the existence of an injective, smooth map from the domain of $\varphi$ to the set of all possible states. 

We remark that we only require $\rho_\theta$ to have a non-degenerate positive spectrum instead of an entirely non-degenerate spectrum, a prior requirement for the HCRB attainability previously considered difficult to remove~\cite{yang2019attaining}.

\emph{Discussion.---}\thmref{thm:purification} shows the QCRB and HCRB of mixed states can be computed from their purifications. We expect \thmref{thm:purification} to have wide applications in quantum metrology beyond the scope here, such as in quantum channel estimation. \thmref{thm:hcrb} provides an efficient measurement protocol, with circuit complexity that scales polynomially with $n$, to attain the HCRB and twice the QCRB. It also shows how known results for pure-state metrology can be converted to those for mixed-state metrology. For example, the proof of \thmref{thm:hcrb} likely implies the attainability of four times the QCRB using random purification channels and only individual 3-design measurements~\cite{zhou2026randomized}. Finally, in \thmref{thm:hcrb}, we require the rank $r$ to be known a priori. It is worth investigating how to relax this assumption via pre-estimation steps.

\emph{Note.---}During the preparation of this manuscript, we became aware of Ref.~\cite{tsang2026approaching}, which proposes a different measurement strategy for attaining the HCRB. 

\emph{Acknowledgments.---}
We would like to thank Senrui Chen, Qiushi Liu and Beni Yoshida for helpful discussions. S.Z. acknowledges funding provided by Perimeter Institute for Theoretical Physics, a research institute supported in part by the Government of Canada through the Department of Innovation, Science and Economic Development Canada and by the Province of Ontario through the Ministry of Colleges and Universities. We acknowledge the use of ChatGPT in checking the proofs and addressing minor gaps. The author bears sole responsibility for the correctness of the proofs and results.

%\lipsum[]

% \newpage 

\bibliography{refs}

\onecolumngrid
\newpage
\appendix

\setcounter{page}{1}

%\onecolumngrid

\setcounter{theorem}{0}
\setcounter{proposition}{0}
\setcounter{lemma}{0}
\setcounter{figure}{0}
\renewcommand{\thefigure}{S\arabic{figure}}
\renewcommand{\thelemma}{S\arabic{lemma}}
\renewcommand{\thetheorem}{S\arabic{theorem}}
\renewcommand{\thecorollary}{S\arabic{corollary}}
\renewcommand{\theproposition}{S\arabic{proposition}}
\renewcommand{\theHfigure}{Supplement.\arabic{figure}}
\renewcommand{\theHlemma}{Supplement.\arabic{lemma}}
\renewcommand{\theHtheorem}{Supplement.\arabic{theorem}}
\renewcommand{\theHcorollary}{Supplement.\arabic{corollary}}

\begin{center}
    \textbf{Supplemental Material: Quantum metrology of mixed states via purification}
\end{center}

\section{Justification of the parametrization step}

We use the following definitions from the main text. 
\begin{gather}
\label{eq:psi}
    \ket{\psi_{\theta,\varphi}} := \sum_{j=1}^r  \sqrt{\lambda_{j,\theta}}\ket{e_{j,\theta}}_S \exp\left(-i\sum_{i=1}^{r^2-1} \varphi_i H_i\right)\cU\ket{j}_E,
\\
    \ket{\psi_{\theta,U}} := \sum_{j=1}^r  \sqrt{\lambda_{j,\theta}}\ket{e_{j,\theta}}_S U\ket{j}_E.
\end{gather}
$\ket{\psi_{\theta,\varphi}}$ implicitly depends on $\cU$. In this appendix, we assume implicitly $\cU$, $U$ belong to $\text{PU}(r)=\text{U}(r)/\text{U}(1)$. $\dpu(U,U_0) = \min_{\alpha}\norm{U-U_0 e^{i\alpha}}$ is a metric on $\text{PU}(r)$. We also view quantum states as vectors in $\bC\mathbb{P}^{dr-1}$ which is $\bC^{dr}$ with nonzero complex scalars quotiented out. Note that $\ket{\psi_{\theta,U}}$ is three times continuously differentiable with respect to $\theta$ (because $\rho_\theta$ is three times continuously differentiable), and smooth with respect to $U$.  

Consider the Taylor expansion in $\Delta\phi := \phi - \zphi$. 
\begin{gather}
\label{eq:taylor}
    \big|\!\braket{\psi_{\theta,\varphi}|\psi_{\ztheta,\zvarphi}}\!\big|^2 = 1 - \frac{1}{8}(\Delta\phi)^\top J(\psi_{\ztheta,\zvarphi}) \Delta\phi + O_{\ztheta,\zphi,\cU}(\norm{\Delta\phi}^3), 
\\
J(\psi_{\ztheta,\zvarphi}) = \begin{pmatrix}
    J(\psi_{\ztheta,\zvarphi})_{SS} & J(\psi_{\ztheta,\zvarphi})_{SE} \\
    J(\psi_{\ztheta,\zvarphi})_{ES} & J(\psi_{\ztheta,\zvarphi})_{EE}
\end{pmatrix}, \quad \text{$S$ and $E$ correspond to $\theta$ and $\varphi$ respectively,}
\\
(J(\psi_{\ztheta,\zvarphi})_{EE})_{ij} = \left(\frac{1}{2}\braket{\{H_i,H_j\}} - \braket{ H_i}\braket{H_j}\right) \bigg|_{{\Tr_S(\psi_{\ztheta,\zvarphi})}}.\label{eq:HH}
\end{gather}
The last term $O_{\ztheta,\zphi,\cU}(\norm{\Delta\phi}^3)$ in \eqref{eq:taylor} means this function is $O(\norm{\Delta\phi}^3)$ for fixed ${\ztheta,\zphi,\cU}$---we will remove this dependence later.  
$\braket{\cdot}$ is evaluated at $\Tr_S(\psi_{\ztheta,\zvarphi})$ which is $\psi_{\ztheta,\zvarphi}$ with the probe system $S$ traced out. Moreover, $\Tr_S(\psi_{\ztheta,\zvarphi})$ must be full-rank, having the same spectrum as the non-zero part of $\rho_\theta$. It implies $J(\psi_{\ztheta,\zvarphi})_{EE} \succ 0$. To see this, take $v \in \bR^{r^2-1}$ and $H_v = \sum_i v_i H_j$, then 
\begin{equation}
\label{eq:Hv}
v^\top J(\psi_{\ztheta,\zvarphi})_{EE} v = \frac{1}{2}\braket{\{H_v,H_v\}} - \braket{ H_v}\braket{H_v} = \braket{H_v^2} - \braket{H_v}^2
\end{equation}
is $0$ if and only if $H_v$ is proportional to identity. Since $H_v$ is traceless, it must be zero, implying $v = 0$. Therefore, $J(\psi_{\ztheta,\zvarphi})_{EE}$ must be full-rank. Moreover, from the second regularity condition, there is some positive constant such that $\epsilon_0$
\begin{equation}
    \rho_\theta \succeq \epsilon_0 \id \quad \Rightarrow\quad  
    \Tr_S(\psi_{\theta,\varphi}) \succeq \epsilon_0 \id  \quad \Rightarrow\quad  
    J(\psi_{\ztheta,\zvarphi})_{EE} \succeq \epsilon_1 \id, 
\end{equation}
where $\epsilon_1$ is some positive constant depending on the choice of $\{H_i\}_i$ found by minimizing \eqref{eq:Hv} over all $v$'s and states with minimum eigenvalues no smaller than $\epsilon_0$. 

We then see that $J(\psi_{\ztheta,\zvarphi})$ must be full-rank, because $J(\psi_{\ztheta,\zvarphi})$ is congruent to 
\begin{equation}
\begin{pmatrix}
   ((J(\psi_{\ztheta,\zvarphi})^{-1})_{SS})^{-1} & 0 \\ 
    0 & J(\psi_{\ztheta,\zvarphi})_{EE} \\
\end{pmatrix}
= \begin{pmatrix}
    J(\rho_{\ztheta}) & 0 \\ 
    0 & J(\psi_{\ztheta,\zvarphi})_{EE} \\
\end{pmatrix},
\end{equation}
where both blocks are full-rank, under the following transformation: 
\begin{multline}
\label{eq:conrgu}
    \begin{pmatrix}
    J(\psi_{\ztheta,\zvarphi})_{SS} & J(\psi_{\ztheta,\zvarphi})_{SE} \\
    J(\psi_{\ztheta,\zvarphi})_{ES} & J(\psi_{\ztheta,\zvarphi})_{EE}
\end{pmatrix}
= \\
\begin{pmatrix}
    \id  &  J(\psi_{\ztheta,\zvarphi})_{SE} J(\psi_{\ztheta,\zvarphi})_{EE}^{-1}\\ 
    0 & \id 
\end{pmatrix}
\begin{pmatrix}
    J(\rho_{\ztheta}) & 0 \\ 
    0 & J(\psi_{\ztheta,\zvarphi})_{EE} \\
\end{pmatrix}
\begin{pmatrix}
    \id  & 0 \\ 
    J(\psi_{\ztheta,\zvarphi})_{EE}^{-1} J(\psi_{\ztheta,\zvarphi})_{ES} & \id 
\end{pmatrix}. 
\end{multline}
Note that the equations above \eqref{eq:taylor}--\eqref{eq:conrgu} apply to any $(\ztheta,\zvarphi) \in \Theta \times \bR^{r^2-1}$. 
% Below we hope to find a domain of $\varphi$, $\widetilde{\Omega}$, within which $\psi_{\theta,\varphi}$ becomes injective. 

We need to put a universal lower bound on the minimum eigenvalue of $J(\psi_{\theta,\varphi})$ because we would like to make sure whenever $1 - |\!\braket{\psi_{\theta,\varphi}|\psi_{\theta',\varphi'}}\!|^2$ is small, $\norm{\phi - \phi'}$ must also be \emph{uniformly} small. 
% Recall that 
% \begin{equation}
% \label{eq:psi}
%     \ket{\psi_{\theta,\varphi}} := \sum_{j=1}^r  \sqrt{\lambda_{j,\theta}}\ket{e_{j,\theta}}_S \exp\left(-i\sum_{i=1}^{r^2-1} \varphi_i H_i\right)\cU\ket{j}_E,
% \end{equation}
It is sufficient to show $J(\psi_{\theta,\varphi})|_{\varphi=0}$ as a function of $\cU$ and $\theta$ has a universal lower bound on its minimum eigenvalue for all $\cU \in {\rm PU}(r)$ and $\theta \in \Theta$, because having $\varphi \neq 0$ is equivalent to multiplying $\cU$ by another unitary. 
% We want to find a universal lower bound that does not depend on $\cU$, $\theta$ and $\varphi$. 
According to the fourth regularity condition, we can safely extend $\Theta$ to $\overline{\Theta}$ in the discussion here. Thus 
we only need to find a positive lower bound that is continuous with respect to $\cU$ and $\theta$, since they belongs to a compact set ${\rm PU}(r) \times \overline{\Theta}$.
From \eqref{eq:conrgu} and the third regularity condition that $J(\rho_{\ztheta}) \succeq \epsilon_2 \id$ for some constant $\epsilon_2$, 
\begin{align}
    \norm{J(\psi_{\ztheta,\zvarphi})^{-1}} &\leq \norm{\begin{pmatrix}
    \id  & 0 \\ 
    J(\psi_{\ztheta,\zvarphi})_{EE}^{-1} J(\psi_{\ztheta,\zvarphi})_{ES} & \id 
\end{pmatrix}}^2 \cdot \max\bigg\{\norm{J(\rho_{\ztheta})^{-1}},\norm{(J(\psi_{\ztheta,\zvarphi})_{EE})^{-1}}\bigg\}\\
&\leq \left(\norm{J(\psi_{\ztheta,\zvarphi})_{EE}^{-1} J(\psi_{\ztheta,\zvarphi})_{ES}} + 2\right)^2 \cdot \max\{1/\epsilon_1,1/\epsilon_2\}\\
&\leq \left(1/\epsilon_1 \norm{J(\psi_{\ztheta,\zvarphi})_{ES}} + 2\right)^2 \cdot \max\{1/\epsilon_1,1/\epsilon_2\}. 
\end{align}
Thus it is sufficient to show $J(\psi_{\ztheta,\zvarphi})_{ES}$ is continuous in $(\ztheta,\cU)$. 
\begin{equation}
    J(\psi_{\ztheta,\zvarphi})_{ij} = 4\Re[\braket{\partial_i\psi|\partial_j\psi} - \braket{\partial_i\psi|\psi}\braket{\psi|\partial_j\psi} ].
\end{equation}
From \eqref{eq:psi}, all terms are continuous in $\ztheta$ and $\cU$. Thus we've obtained a continuous function as an upper bound on $\norm{J(\psi_{\ztheta,\zvarphi})^{-1}}$. That means 
\begin{lemma}
\label{lemma:1}
$\exists \epsilon > 0$, s.t.,  $
    J(\psi_{\theta,\varphi}) \succeq \epsilon \id, \quad \forall \theta,\varphi,\cU$. 
\end{lemma}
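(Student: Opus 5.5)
The plan is to turn the chain of estimates just derived into a uniform bound on $\norm{J(\psi_{\theta,\varphi})^{-1}}$, and then take $\epsilon$ to be the reciprocal of that bound. The argument has three steps.

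\emph{Step 1: reduce to $\varphi = 0$.} By \eqref{eq:psi}, the state $\psi_{\theta,\varphi}$ built from $\cU$ is the state $\psi_{\theta,0}$ built from $\cU' = \exp(-i\sum_i \varphi_i H_i)\cU \in {\rm PU}(r)$. The $\varphi$-derivatives at $\varphi$ differ from those at $0$ only by the fixed linear change of generators coming from the derivative of the exponential map. I will handle this in one of two ways. The first option is to note that this derivative map is invertible with uniformly bounded inverse on the relevant bounded $\varphi$-domain. The cleaner option, which I intend to use, is to observe that the estimate for $J(\psi)_{EE}$ in \eqref{eq:Hv} holds for any basis of traceless Hermitian generators with a uniform constant. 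With that observation, only the case $\varphi = 0$ with $\cU$ arbitrary needs to be treated.

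\emph{Step 2: invert the block decomposition.} Use the factorisation \eqref{eq:conrgu}. Its middle block is bounded below by $\min\{\epsilon_2,\epsilon_1\}$: the lower bound $\epsilon_2$ comes from regularity condition (3) via \thmref{thm:purification}, and $\epsilon_1$ comes from condition (2) via \eqref{eq:Hv}. The outer triangular factors have inverses with norm at most $\norm{J_{EE}^{-1}J_{ES}}+2$. This gives the displayed bound
$\norm{J^{-1}} \le (\norm{J_{ES}}/\epsilon_1+2)^2\max\{1/\epsilon_1,1/\epsilon_2\}$.

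\emph{Step 3: bound $J_{ES}$ uniformly.} The entries of $J_{ES}$ are $4\Re[\braket{\partial_i\psi|\partial_j\psi}-\braket{\partial_i\psi|\psi}\braket{\psi|\partial_j\psi}]$. Here $\partial_{\varphi_j}\psi = -i(\id\otimes H_j)\psi$, and $\partial_{\theta_i}\psi$ involves $\partial_\theta\sqrt{\lambda_j}$ and $\partial_\theta \ket{e_{j,\theta}}$. These are continuous in $\theta$ on $\overline{\Theta}$, which is legitimate by regularity condition (4). The eigenvectors are well defined and differentiable because the spectrum is non-degenerate by condition (1). With $\cU$ fixed, everything is continuous in $(\theta,\cU) \in \overline{\Theta}\times{\rm PU}(r)$, a compact set, so $\norm{J_{ES}}$ is bounded by some constant $K$. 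In fact $\norm{H_j}$ and $\norm{\partial_\theta\psi}$ are bounded and independent of $\cU$, so the bound can be written down directly. Then $\epsilon := \big[(K/\epsilon_1+2)^2\max\{1/\epsilon_1,1/\epsilon_2\}\big]^{-1}$ works, since $J \succ 0$ and $\norm{J^{-1}}\le 1/\epsilon$ together give $J\succeq\epsilon\id$.

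The main obstacle I expect is the uniformity in $\varphi$ from Step 1. The stated reduction must be checked against the fact that the generators in the chain rule change with $\varphi$. If necessary, I would restrict to the bounded set $\|\varphi\|<\kappa$, which is the only place the lemma is used, and bound the Jacobian of the exponential map there.
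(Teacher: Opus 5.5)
Your Steps 2 and 3 are exactly the paper's argument: the congruence \eqref{eq:conrgu} with the Schur complement identified as $J(\rho_\theta)$ via \thmref{thm:purification}, the bound $\norm{J^{-1}}\le(\norm{J_{ES}}/\epsilon_1+2)^2\max\{1/\epsilon_1,1/\epsilon_2\}$, and continuity of $J_{ES}$ on the compact set $\overline{\Theta}\times{\rm PU}(r)$. The paper reduces to $\varphi=0$ only by the one-line remark that $\varphi\neq 0$ ``is equivalent to multiplying $\cU$ by another unitary.'' Your suspicion that this is incomplete is justified. With $A=\sum_k\varphi_kH_k$ one has $\partial_{\varphi_i}e^{-iA}=-i\tilde{H}_i(\varphi)e^{-iA}$, where $\tilde{H}_i(\varphi)=\int_0^1 e^{-isA}H_ie^{isA}\,ds$. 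So the matrix at $(\theta,\varphi,\cU)$ is the $\varphi=0$ matrix for $e^{-iA}\cU$, but computed with generators $\tilde{H}_i(\varphi)$ rather than $H_i$.

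The ``cleaner option'' you say you intend to use does not hold. For traceless $H_v$ and $\Tr_S(\psi)\succeq\epsilon_0\id$, the quantity in \eqref{eq:Hv} satisfies $\epsilon_0\norm{H_v}_2^2\le\braket{H_v^2}-\braket{H_v}^2\le\norm{H_v}_2^2$. The best constant therefore lies between $\epsilon_0 g$ and $g$, where $g$ is the smallest eigenvalue of the Hilbert--Schmidt Gram matrix of the generators. It is not uniform over bases, and bounding $g$ is exactly your first option.

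That bound cannot be skipped. In the eigenbasis of $A$ (eigenvalues $a_j$), the map $H\mapsto\tilde{H}$ multiplies the $(j,k)$ entry by $(1-e^{-i(a_j-a_k)})/(i(a_j-a_k))$. This factor vanishes when $a_j-a_k\in 2\pi\mathbb{Z}\setminus\{0\}$, and there the $\tilde{H}_i(\varphi)$ are linearly dependent, so $J_{EE}$ and hence $J(\psi_{\theta,\varphi})$ are singular. For example, with $r=2$ and $H_i=\sigma_i$, the whole sphere $\norm{\varphi}=\pi$ is mapped to $-\id$.

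So the lemma is false for unrestricted $\varphi$, and the restriction in your last paragraph is necessary, not optional. Take the closed ball $\norm{\varphi}\le\kappa'$ with $\kappa'\sum_k\norm{H_k}<\pi$. On it:
\begin{itemize}
\item the Gram matrix of $\{\tilde{H}_i(\varphi)\}$ is continuous and nonsingular, hence uniformly bounded below, which gives a uniform replacement for $\epsilon_1$;
\item $\norm{\tilde{H}_i}\le\norm{H_i}$, which keeps $J_{ES}$ uniformly bounded;
\item the $\tilde{H}_i(\varphi)$ still form a basis, so \thmref{thm:purification} still identifies the Schur complement with $J(\rho_\theta)$.
\end{itemize}
With this restriction, and $\kappa_0\le\kappa'$ in the subsequent injectivity argument (which only involves $\varphi$ near $\cvarphi=0$), your proof goes through.
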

Note that both $\norm{J(\rho_{\theta})}$ and $\norm{(J(\psi_{\theta,\varphi})_{EE})}$ are also bounded from above, using the regularity condition, and \eqref{eq:HH} (viewing $H_i$ as pre-determined constant matrices), respectively.  Thus, the above discussion can also be applied to prove similarly the following lemma (although we will not use it until \appref{app:boundedness}). 
\begin{lemma}
\label{lemma:3}
$\exists L > 0$, s.t.,  $
    J(\psi_{\theta,\varphi}) \preceq L \cdot \id, \quad \forall \theta,\varphi,\cU. $
\end{lemma}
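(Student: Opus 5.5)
The plan is to mirror the argument just given for Lemma~\ref{lemma:1}, but now bounding $J(\psi_{\theta,\varphi})$ from above. The bound must be uniform in $\theta\in\Theta$, $\varphi$ and $\cU\in{\rm PU}(r)$.

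First, I would reduce to $\varphi=0$. Since $\exp(-i\sum_i\varphi_iH_i)\cU$ is again an element of ${\rm PU}(r)$, the state at $(\theta,\varphi)$ with reference $\cU$ coincides with the state at $(\theta,0)$ with reference $\cU' = \exp(-i\sum_i\varphi_iH_i)\cU$. The derivatives in the $\theta$ directions transform trivially under this identification. The derivatives in the $\varphi$ directions, however, are those of $\exp(-i\sum_i(\varphi_i+t_i)H_i)$ at $t=0$. By the standard derivative-of-the-exponential formula, these generators are $\int_0^1 e^{-is\varphi\cdot H}H_ie^{is\varphi\cdot H}ds$, which have operator norm at most $\norm{H_i}$. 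So every generator involved is norm-bounded independently of $\varphi$.

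Second, I would bound each QFIM entry directly, rather than going through the congruence \eqref{eq:conrgu}. I would use $J_{ij}=4\Re[\braket{\partial_i\psi|\partial_j\psi}-\braket{\partial_i\psi|\psi}\braket{\psi|\partial_j\psi}]$, so that $|J_{ij}|\le 8\norm{\partial_i\psi}\norm{\partial_j\psi}$. Hence $\norm{J}\le \trace J\le 8\sum_i\norm{\partial_i\psi}^2$, and it suffices to bound the norm of each derivative vector.

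For the $\varphi$ directions, $\partial_{\varphi_i}\ket{\psi}=-i(\id_S\otimes \tilde H_i)\ket{\psi}$ with $\tilde H_i$ the bounded generator above. This gives $\norm{\partial_{\varphi_i}\psi}\le\norm{H_i}$; equivalently, $J_{EE}$ is a covariance matrix of bounded observables, as in \eqref{eq:HH}.

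For the $\theta$ directions, $\partial_{\theta_k}\ket{\psi}=\sum_j\partial_{\theta_k}(\sqrt{\lambda_{j,\theta}}\ket{e_{j,\theta}})\,U\ket{j}$. Its norm depends only on $\theta$, not on $U$, because $U$ acts isometrically on $E$. I would bound it using continuity on the compact set $\overline{\Theta}\subset\Theta^\diamond$, which is granted by regularity condition (4). The eigenvalues $\lambda_{j,\theta}$ are bounded below by regularity condition (2), so $\sqrt{\lambda}$ is smooth there. The spectrum is non-degenerate, so the eigenvectors can be chosen continuously differentiable, with derivatives controlled by $\norm{\partial\rho_\theta}$ divided by the eigenvalue gaps.

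The main obstacle is this last step. The gaps must be uniformly bounded below on $\overline{\Theta}$, and this follows from non-degeneracy at every point of $\Theta^\diamond$ together with compactness of $\overline{\Theta}$.

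A possible concern is the phase convention of $\ket{e_{j,\theta}}$. The QFIM of a pure state is gauge-invariant, so I would fix a locally smooth gauge.

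Combining the two bounds, $\trace J\le L$ for a constant $L$, which proves the lemma. As a cross-check, the congruence route also works. The remark after Lemma~\ref{lemma:1} notes that $\norm{J(\rho_\theta)}$ (by regularity condition (3)) and $\norm{J_{EE}}$ are bounded above. Bounding $\norm{J_{ES}}$ by Cauchy--Schwarz and substituting into \eqref{eq:conrgu} then gives the same conclusion.
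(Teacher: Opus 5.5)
Your argument is correct, but it takes a different route from the paper's. The paper obtains Lemma~\ref{lemma:3} by rerunning the congruence \eqref{eq:conrgu} from \lemmaref{lemma:1}. There, $\norm{J(\psi_{\theta,\varphi})}$ is at most the squared norm of the shear matrix times $\max\{\norm{J(\rho_\theta)},\norm{J(\psi_{\theta,\varphi})_{EE}}\}$. The first block is bounded by regularity condition (3) and the second by \eqref{eq:HH}. The shear is controlled by continuity of $J_{ES}$ on $\overline{\Theta}\times{\rm PU}(r)$ together with $J_{EE}\succeq\epsilon_1\id$. The case $\varphi\neq0$ is handled by absorbing $e^{-i\varphi\cdot H}$ into $\cU$.

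Your bound $\norm{J}\le\trace J\le 4\sum_i\norm{\partial_i\psi}^2$ needs only upper bounds on derivative vectors. So it uses neither the lower bound on $J_{EE}$ nor condition (3). Since $\norm{\partial_{\theta_k}\psi}$ is independent of $U$, you do not even need compactness of ${\rm PU}(r)$.

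You are also more careful about $\varphi\neq0$. Absorbing the exponential into $\cU$ is exact for the $\theta$-derivatives, but not for the $\varphi$-derivatives, whose generators are your $\tilde H_i(\varphi)$ rather than $H_i$. Your estimate $\norm{\tilde H_i}\le\norm{H_i}$ covers every $\varphi$. A congruence argument done honestly at $\varphi$ needs $J_{EE}^{-1}$, which does not exist when $\varphi\cdot H$ has an eigenvalue difference in $2\pi\mathbb{Z}\setminus\{0\}$, because the $\tilde H_i$ then become linearly dependent. The price of your route is that you must bound $\partial_\theta\ket{e_{j,\theta}}$ explicitly, which the paper hides inside ``continuity on $\overline{\Theta}$''.

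On that point, one correction: the phase freedom of $\ket{e_{j,\theta}}$ is not a global phase of $\psi_{\theta,\varphi}$. Replacing $\ket{e_{j,\theta}}$ by $e^{i\alpha_j(\theta)}\ket{e_{j,\theta}}$ replaces the environment unitary $U$ by the $\theta$-dependent $U\,\mathrm{diag}(e^{i\alpha_1(\theta)},\ldots,e^{i\alpha_r(\theta)})$. This adds to $\partial_{\theta_k}\psi$ combinations of the $\varphi$-derivatives weighted by $\partial_{\theta_k}\alpha_j$. Hence $J_{SS}$ and $J_{SE}$ do change; only $(J^{-1})_{SS}=J(\rho_\theta)^{-1}$ is invariant, by \thmref{thm:purification}. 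For $r\ge2$, relative phases whose derivatives blow up near $\partial\Theta$ would make the lemma false.

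So rather than appealing to gauge invariance, you should require a gauge whose connection terms $\braket{e_{j,\theta}|\partial_{\theta_k}e_{j,\theta}}$ are uniformly bounded. An example is a gauge that is $C^1$ on a neighbourhood of $\overline{\Theta}$, taken on finitely many patches if no global smooth gauge exists. This is the same tacit assumption the paper makes. With it, your estimate $\norm{\partial_{\theta_k}\ket{e_{j,\theta}}}^2\le\norm{\partial_{\theta_k}\rho_\theta}^2/g^2+|\braket{e_{j,\theta}|\partial_{\theta_k}e_{j,\theta}}|^2$ closes the argument, where $g$ is the minimum of $\epsilon_0$ and the spectral gaps on $\overline{\Theta}$.
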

% we have for $u \in \bR^{m}$ and $v \in \bR^{r^2-1}$, 
% \begin{align}
%     &\quad \,\begin{pmatrix}
%         u^\top & v^\top 
%     \end{pmatrix}
%     J(\psi_{\ztheta,\zvarphi})
%     \begin{pmatrix}
%         u \\ v
%     \end{pmatrix}\\
%     &= u^\top J(\psi_{\ztheta,\zvarphi})_{SS} u + \big(J(\psi_{\ztheta,\zvarphi})_{EE}^{-1} J(\psi_{\ztheta,\zvarphi})_{ES} u + v\big)^\top J(\psi_{\ztheta,\zvarphi})_{EE} \big(J(\psi_{\ztheta,\zvarphi})_{EE}^{-1} J(\psi_{\ztheta,\zvarphi})_{ES} u + v\big)\\
%     &= \epsilon_2 \norm{u}^2 + \epsilon_1 \big\|J(\psi_{\ztheta,\zvarphi})_{EE}^{-1} J(\psi_{\ztheta,\zvarphi})_{ES} u + v\big\|^2
% \end{align} 

Going back to \eqref{eq:taylor}, using \lemmaref{lemma:1}, we would like to find a positive constant $\kappa_0$ such that $\psi_{\theta,\varphi}$ is an injective function within $B_{\kappa_0}(\phi^0):=\{\phi|\|\phi-\phi^0\| < \kappa_0\}$.   
It is equivalent to saying that we want \eqref{eq:taylor} to be strictly smaller than one for any distinct $(\theta,\varphi)$ and $(\theta^0,\varphi^0)$, making them distinguishable. So far, we show the second term $J(\psi_{\ztheta,\zvarphi})$ is bounded below by a universal positive constant. It remains to show the third term is universally bounded. This holds true because $\ket{\psi_{\theta,\varphi}}$ and its third-order derivatives are continuous in the domain of $(\theta,U)$ (with $U=\exp(-i\sum_{i=1}^{r^2-1} \varphi_i H_i)\cU$) and the compactness implies the third term must be $O(\norm{\Delta\phi}^3)$ with a uniform convergence. Note that the discussion above also implies whenever $1 - \big|\!\braket{\psi_{\theta,\varphi}|\psi_{\ztheta,\zvarphi}}\!\big|^2$ converges to zero, $\Delta\phi$ must also converge to zero with a scaling of $O \big( (1 - \big|\!\braket{\psi_{\theta,\varphi}|\psi_{\ztheta,\zvarphi}}\!\big|^2)^{1/2} \big)$.

Finally, we prove the following lemmas. 
\begin{lemma}
\label{lemma:2}
Given ${{\kappa_1}} > 0$, there exists ${{\eta}}> 0$ such that when $1 - |\!\braket{\psi_{\theta,U}|\psi_{\ztheta,\zU}}\!|^2 < \eta$, $\norm{\theta - \ztheta} < {{\kappa_1}}$ and $\dpu(U,\zU) < {{\kappa_1}}$. 
\end{lemma}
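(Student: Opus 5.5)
We prove Lemma~\ref{lemma:2} by a compactness-and-contradiction argument on $\overline{\Theta}\times{\rm PU}(r)$, using identifiability of the purified family up to a global phase. The continuous object is the map
\begin{equation*}
(\theta,U,\ztheta,\zU)\mapsto 1-|\!\braket{\psi_{\theta,U}|\psi_{\ztheta,\zU}}\!|^2 .
\end{equation*}
The domain $(\overline{\Theta}\times{\rm PU}(r))^2$ is compact, and the map is well defined on ${\rm PU}(r)$ because global phases of $U$ only change the overlap by a phase. It is continuous because $\lambda_{j,\theta}$ and $\ket{e_{j,\theta}}$ are continuous in $\theta$. For the eigenvectors this uses the non-degenerate positive spectrum, with phases fixed locally; any phase ambiguity of $\ket{e_{j,\theta}}$ can be absorbed into $U$, so what matters is $\psi_{\theta,U}$ modulo this gauge. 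By the fourth regularity condition, everything extends to $\overline{\Theta}\subseteq\Theta^\diamond$.

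\emph{Step 1 (zero set).} Suppose the overlap is $1$. Tracing out $E$ gives $\rho_\theta=\rho_{\ztheta}$, so $\theta=\ztheta$ by identifiability, which holds on $\Theta^\diamond\supseteq\overline{\Theta}$. Then $\sum_j\sqrt{\lambda_{j,\theta}}\ket{e_{j,\theta}}(U-e^{i\alpha}\zU)\ket{j}=0$. Since the $\ket{e_{j,\theta}}$ are orthonormal and $\lambda_{j,\theta}>0$, this forces $U\ket{j}=e^{i\alpha}\zU\ket{j}$ for every $j$. Hence $\dpu(U,\zU)=0$.

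\emph{Step 2 (contradiction).} Assume the lemma fails for some $\kappa_1$. Then there are sequences $(\theta_n,U_n,\ztheta_n,\zU_n)$ with overlap defect below $1/n$, but with $\norm{\theta_n-\ztheta_n}\ge\kappa_1$ or $\dpu(U_n,\zU_n)\ge\kappa_1$. By compactness, pass to a convergent subsequence with limit $(\theta_*,U_*,\ztheta_*,\zU_*)$. By continuity, the overlap defect vanishes at this limit. The separation condition is closed, so it also holds at the limit. This contradicts Step 1.

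The main obstacle is the continuity of $\theta\mapsto\psi_{\theta,U}$ in the presence of the eigenvector phase gauge. I would handle it by defining the overlap through $\rho_\theta$-dependent quantities that are gauge-invariant up to a redefinition of $U$. Concretely, $\psi_{\theta,U}$ equals $(\sqrt{\rho_\theta}\otimes\id)$ applied to a fixed maximally entangled-type vector in the basis $\{\ket{e_{j,\theta}}\}$, rotated by $U$. Alternatively, fix continuous local eigenvector choices, which exist on a simply connected neighbourhood by non-degeneracy, and argue locally. The exact statement asserted, that $U$ is close in $\dpu$, then depends on this chosen gauge, which the paper fixes implicitly.
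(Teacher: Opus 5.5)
Your proof is correct and follows the paper's argument. Your sequential contradiction is a rephrasing of the paper's step that the continuous function $1-|\braket{\psi_{\theta,U}|\psi_{\ztheta,\zU}}|^2$ attains a strictly positive minimum on the compact set where $\norm{\theta-\ztheta}\ge\kappa_1$ or $\dpu(U,\zU)\ge\kappa_1$; in both, the zero set is excluded by identifiability after tracing out $E$, and then by comparing Schmidt components to get $U=\zU$ in ${\rm PU}(r)$.

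Your caveat about the eigenvector phase gauge is well taken. The paper makes the same implicit assumption of a globally continuous choice of $\ket{e_{j,\theta}}$ on $\overline{\Theta}$ when it asserts that $\psi_{\theta,U}$ is continuous in $\theta$. A gauge that is continuous only locally would not suffice for the global compactness step, and without a global one the lemma as stated can fail.
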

\begin{proof}
Consider the set 
\begin{equation}
    K_{{\kappa_1}} = \big\{(\theta,U,\ztheta,\zU) \in \overline{\Theta} \times {\rm PU}(r) \times \overline{\Theta} \times {\rm PU}(r) \big| \norm{\theta - \ztheta} \geq {{\kappa_1}} \text{~or~} \dpu(U,\zU) \geq {{\kappa_1}}\big\}. 
\end{equation}
Since $\overline{\Theta}$ and ${\rm PU}(r)$ are compact, $K_{{\kappa_1}}$ is compact. Consider
\begin{equation}
    g(\theta,U,\ztheta,\zU) = 1 - |\!\braket{\psi_{\theta,U}|\psi_{\ztheta,\zU}}\!|^2. 
\end{equation}
It is a continuous function on the compact set $\overline{\Theta} \times {\rm PU}(r) \times \overline{\Theta} \times {\rm PU}(r)$. Moreover, in $K_{{\kappa_1}}$, we have $\theta \neq \theta'$ or $U \neq U'$ and $g(\theta,U,\ztheta,\zU)$ is positive. To see this, assume we have $g(\theta,U,\ztheta,\zU) = 0$. By identifiability of $\theta$ and monotonicity of fidelity (when tracing out $E$), we must have $\theta = \ztheta$. Since $\lambda_{j,\theta}$ are distinct for $j$ (due to the non-degenerate positive spectrum), we then have $U = \zU$.  
Therefore, there is a positive minimum of $g(\theta,U,\ztheta,\zU)$ in $\overline{\Theta} \times {\rm PU}(r) \times \overline{\Theta} \times {\rm PU}(r)$. We choose it to be $\eta$.  
\end{proof}
\begin{lemma}
\label{lemma:4}
There are small positive constants $\kappa_{2,3}$ such that (1)~$\exp\Big(-i\sum_{i=1}^{r^2-1} \varphi_i H_i\Big)  \in {\rm PU}(r)$ is injective when $\norm{\varphi} < \kappa_3$, and (2)~$\exp\Big(-i\sum_{i=1}^{r^2-1} \varphi_i H_i\Big)$ can represent any unitary $U \in {\rm PU}(r)$ that satisfies $\dpu\Big(U,\id\Big) < \kappa_2$ using some $\varphi$ satisfying $\norm{\varphi} < \kappa_3$. 
\end{lemma}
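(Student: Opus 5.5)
This is essentially the statement that the exponential map from the Lie algebra $\mathfrak{su}(r)$ (equivalently $\mathfrak{u}(r)/i\bR\id$) to ${\rm PU}(r)$ is a local diffeomorphism at the origin. I will prove it with the inverse function theorem applied to the map
\begin{equation*}
F:\bR^{r^2-1}\to {\rm PU}(r),\qquad F(\varphi)=\Big[\exp\Big(-i\sum_{i=1}^{r^2-1}\varphi_i H_i\Big)\Big].
\end{equation*}

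\textbf{Step 1: smoothness and dimension.} ${\rm PU}(r)={\rm U}(r)/{\rm U}(1)$ is a compact smooth manifold of dimension $r^2-1$. Its tangent space at the identity is the space of anti-Hermitian matrices modulo $i\bR\id$, which is identified with $\{-iH: H \text{ traceless Hermitian}\}$. The map $F$ is smooth because it is the composition of the matrix exponential with the quotient map.

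\textbf{Step 2: invertible differential at $\varphi=0$.} The differential of $F$ at $\varphi=0$ sends $v\mapsto[-i\sum_i v_iH_i]$. Since $\{H_i\}$ is a basis of traceless Hermitian operators, this linear map is injective. Traceless operators avoid the quotient direction $i\bR\id$, so the map is a bijection onto the tangent space. By the inverse function theorem there exist an open neighborhood $V\ni 0$ in $\bR^{r^2-1}$ and an open neighborhood $\mathcal U\ni[\id]$ in ${\rm PU}(r)$ such that $F|_V:V\to\mathcal U$ is a diffeomorphism.

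\textbf{Step 3: choosing the constants.} First choose $\kappa_3>0$ so that the open ball $B_{\kappa_3}(0)\subseteq V$. Then part (1), injectivity on $\norm{\varphi}<\kappa_3$, holds because $F$ is injective on all of $V$. Next, $F(B_{\kappa_3}(0))$ is open in ${\rm PU}(r)$ and contains $[\id]$, since $F|_V$ is a homeomorphism onto its image. The topology of ${\rm PU}(r)$ is the one induced by the metric $\dpu$, so this open set contains a $\dpu$-ball of some radius $\kappa_2>0$ around $\id$. That gives part (2).

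\textbf{Step 4: checking the metric (the one nontrivial point).} I need $\dpu$ to metrize the quotient topology. The quotient of a compact metric space by the isometric ${\rm U}(1)$ action has its quotient metric given by exactly $\min_\alpha\norm{U-U_0e^{i\alpha}}$, and this metric induces the quotient topology. This is standard for quotients by compact groups acting isometrically, and I would cite it rather than reprove it.

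Alternatively, I could avoid the quotient entirely and give an explicit argument. Write $U=e^{i\alpha}e^{-iK}$ using the principal logarithm, valid when $\norm{U-e^{i\alpha}\id}<1$. Then take $H=K-\frac{\trace K}{r}\id$, absorb the trace part into the phase, and expand $H$ in the basis $\{H_i\}$. The coefficients $\varphi$ depend continuously on $U$ and are small when $\dpu(U,\id)$ is small; this supplies the bound needed for part (2). For part (1), injectivity near $0$ follows from injectivity of the logarithm on a small neighborhood, together with the fact that a traceless $H$ cannot differ from another by a nonzero multiple of $\id$.
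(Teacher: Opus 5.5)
Your proposal is correct and follows essentially the same route as the paper. In both, the inverse function theorem at $\varphi=0$ gives injectivity on a small ball $\norm{\varphi}<\kappa_3$, and the open image of that ball contains a $\dpu$-ball of radius $\kappa_2$ around $\id$; you additionally check that the differential is invertible and that $\dpu$ metrizes the quotient topology on ${\rm PU}(r)$, both of which the paper leaves implicit.
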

\begin{proof}
(1) follows from the inverse function theorem~\cite{LeeSmoothManifolds}. (2) essentially follows from the fact that $\dpu$ is a metric on $\text{PU}(r)$ so that there is an open ball (centered at $\id$) inside the open image of $\exp\Big(-i\sum_{i=1}^{r^2-1} \varphi_i H_i\Big)$ within $\norm{\varphi} < \kappa_3$ that contains $\id$. 
\end{proof}

Finally, let 
\begin{equation}
    \kappa = \min\{\kappa_3,\kappa_0\}, 
\end{equation}
and $\psi_{\theta,\varphi}$ remains to be an injective function within $B_{\kappa}(\phi^0)$.

Then let $\kappa_1 \leq \kappa_2 $ be a small and positive constant such that 
\begin{equation}
\textstyle
\label{eq:final}
  \Big\{(\theta,\varphi) \Big| \norm{\theta-\ctheta} < \kappa_1, \dpu\Big(\exp\Big(-i\sum_{i=1}^{r^2-1} \varphi_i H_i\Big)\cU,\cU\Big) < \kappa_1, \norm{\varphi} < \kappa_3 \Big\} \subseteq B_{\kappa}(\cphi). 
\end{equation}
% \sisi{need revision below.}
% it is then clear that there exists a small (but constant size) open set $\widetilde{\Theta} \times \widetilde{\Omega}$ within which $(\theta,\varphi)$ becomes identifiable; because \eqref{eq:taylor} will be strictly smaller than one for any distinct $(\theta,\varphi)$ and $(\theta^0,\varphi^0)$, making them distinsuighable. 
% The above discussion shows there exists a positive threshold on $\Delta\phi$, below which $(\theta,\varphi)$ is identifiable from state $\psi_{\theta,\varphi}$. 
% $\psi_{\theta,\varphi}$ becomes an injective map. 
% Furthermore, for any $\theta,\varphi$

In our measurement protocol (where $T_{U,s,n_1}$ occurs), we first obtain $(\ctheta,\cU)$ from Stage~1 such that $\psi_{\ctheta,\cU}$ is close to the true state $\psi_{\theta,U}$. We want to make sure \emph{$\psi_{\theta,U}$ can be parametrized as $\psi_{\theta,\varphi}$ when Stage~1 is precise enough}. In fact, we can choose $\eta > 0$ from \lemmaref{lemma:2} with the choice of $\kappa_1$ determined above. In that case,  $1 - |\!\braket{\psi_{\theta,U}|\psi_{\ctheta,\cU}}\!|^2 \leq \eta$ implies 
\begin{equation}
\textstyle
    \norm{\theta-\ctheta} < \kappa_1,\quad \dpu\Big(U,\cU\Big) < \kappa_1.
\end{equation}
Since $\kappa_1 \leq \kappa_2$, from \lemmaref{lemma:4}, there must exist some $\varphi$ satisfying $\norm{\varphi} \leq \kappa_3$ such that $\exp\Big(-i\sum_{i=1}^{r^2-1} \varphi_i H_i\Big)\cU = U$. Moreover, $(\theta,\varphi) \in B_{\kappa}(\cphi)$ from \eqref{eq:final}, on which $\psi_\phi$ is injective. 
% Using \lemmaref{lemma:2}, we have shown there exists $\eta > 0$, such that 
% \begin{equation}
%     \big\{\psi_{\theta,U} \big| 1 - |\!\braket{\psi_{\theta,U}|\psi_{\ctheta,\cU}}\!|^2 \leq \eta \big\} \subseteq B_{\kappa}(\cphi). 
% \end{equation}
This justifies the parameterization step.

\section{Boundedness of locally optimal estimators}
\label{app:boundedness}

Here we briefly review the optimal measurement and the corresponding locally unbiased estimator that achieves the HCRB (or twice the QCRB). We also show the boundedness of the $\theta$-estimators. Note that here by \emph{boundedness} we mean the locally unbiased estimators are bounded by a universal constant that does not depend on $(\ctheta,\cU)$. The important property we use below are \lemmaref{lemma:1} and \lemmaref{lemma:3}. 

We first state Matsumoto's HCRB for pure states. It was shown in \cite{matsumoto2002new}: 
\begin{equation}
\label{eq:matsumoto-1}
C_{\rm H}(\psi_\phi,\tW) = \min_{\vX=\{\ket{x_i}\}_{i=1}^{m^*},\ket{x_i}\in \mH \oplus \bC^{m^*}} \trace(\tW Z(\vX)),
\end{equation}
where $    Z(\vX)_{ij} = \braket{x_i|x_j}$  and $\ket{x_i}$ satisfies
\begin{equation}
\braket{x_i|\psi_\phi} = 0,\quad 2\Re[\braket{x_i|\partial_j\psi_\phi}] = \delta_{ij},\quad \Im[Z(\vX)] = 0. 
\end{equation}
Here $\mH$ is the Hilbert space $\psi_\phi$ acting on, and $\mH \oplus \bC^{m^*}$ means an enlarged Hilbert space obtained by adjoining an $m^*$-dimensional auxiliary space. Note that in the situation we consider, $\mH = \mH_{SE}$, $\dim(\mH) = dr$ and $\widetilde{W} = W^*$. $\ket{x_i}$ belongs to the enlarged space. We also abuse the notation and think of $\ket{\psi_\phi}$ and $\ket{\partial_j\psi_\phi}$ as vectors in $\mH \oplus \bC^{m^*}$ which are supported only on subspace $\mH$. 

Consider the enlarged Hilbert space $\mH \oplus \bC^{m^*}$. Given the optimal $\{\ket{x_i}\} \in \mH \oplus \bC^{m^*}$, the following projective measurement and locally unbiased estimator is optimal in $\mH \oplus \bC^{m^*}$: 
    \begin{equation}
    M_\bell = \ket{e_\bell}\bra{e_\bell},\quad \hat\phi_i(\bell) = \frac{\braket{e_\bell|x_i}}{\braket{e_\bell|\psi_\phi}} + \phi_i,
    \end{equation}
    where $\{\ket{e_\bell}\}$ can be any orthonormal basis of $\mH \oplus \bC^{m^*}$ such that for all $\bell$, the inner product between $\ket{e_\bell}$ and vectors in $\{\ket{\psi_\phi},\ket{x_i},\forall i\}$ is real. We can therefore, assume without loss of generality, 
    \begin{equation}
        \forall \bell \in [\dim(\mH) + m^*], \quad \braket{e_\bell|\psi_\phi} = \frac{1}{\sqrt{\dim(\mH) + m^*}}, 
    \end{equation}
    making the estimator \emph{bounded}, because when $\tW = W^*$, 
    \begin{multline}
        \abs{\braket{e_\bell|x_i}}^2 \leq \braket{x_i|x_i} \leq \norm{W^{-1}}\Tr( W^* V) \leq 2\norm{W^{-1}}\Tr( W^* (J(\psi_{\theta,\varphi})^{-1})) \\= 2\norm{W^{-1}}\trace(W (J(\psi_{\theta,\varphi})^{-1})_{SS}) \leq 2\norm{W^{-1}}\trace(J(\rho_{\theta})^{-1}), 
    \end{multline}
    which is bounded due to the regularity condition and the fact that $W$ is a known, positive matrix. 
    To see the estimator is locally unbiased, 
    \begin{gather}
    \bE_\bell [\hat\phi_i(\bell)] = \phi_i + \sum_\bell  \frac{\braket{e_\bell|x_i}}{\braket{e_\bell|\psi_\phi}}  \braket{\psi_\phi | M_\bell | \psi_\phi} = \phi_i
    \\
    \partial_j \bE_\bell [\hat\phi_i(\bell)] = \sum_\bell  \frac{\braket{e_\bell|x_i}}{\braket{e_\bell|\psi_\phi}}  \partial_j\braket{\psi_\phi | M_\bell | \psi_\phi}
    = \sum_\bell  \frac{\braket{e_\bell|x_i}}{\braket{e_\bell|\psi_\phi}}  2 \Re[\braket{e_\bell | \partial_j \psi_\phi}\braket{\psi_\phi|e_\bell}] = \delta_{ij}.
    \end{gather}
    They are also optimal because 
    \begin{equation}
    \begin{split}
    V(\psi_\phi,M,\hat\phi)_{ij} 
    &= \sum_\bell (\hat\phi_i(\bell)-\phi_i)(\hat\phi_j(\bell)-\phi_j) \braket{\psi_\phi|M_\bell|\psi_\phi} \\
    &= \sum_\bell \frac{\braket{x_i | e_\bell}}{\braket{\psi_\phi | e_\bell }}  \frac{\braket{e_\bell|x_j}}{\braket{e_\bell|\psi_\phi}}  \braket{\psi_\phi | M_\bell | \psi_\phi} = \sum_\bell {\braket{x_i | e_\bell}} {\braket{e_\bell|x_j}} = \braket{x_i|x_j}. 
    \end{split}
    \end{equation}
Note that $M$ is a projective measurement on $\mH \oplus \bC^{m^*}$. We can restrict it to a rank-one (but not necessarily projective) measurement on $\mH$ simply by projecting it to $\mH$, because $\braket{\psi_\phi|\Pi_\mH M_\bell \Pi_\mH|\psi_\phi} = \braket{\psi_\phi|M_\bell|\psi_\phi}$ and then 
\begin{equation}
    V(\psi_\phi,\Pi_\mH M \Pi_\mH,\hat\phi)_{ij} = \sum_\bell (\hat\phi_i(\bell)-\phi_i)(\hat\phi_j(\bell)-\phi_j) \braket{\psi_\phi|\Pi_\mH M_\bell \Pi_\mH|\psi_\phi} = \braket{x_i|x_j}. 
\end{equation}
The optimal measurement $\{\ket{b_{\bell}}\bra{b_{\bell}}\}$ we use in the main text is 
\begin{equation}
    \ket{b_{\bell}} = ( \Pi_\mH \ket{e_{\bell}} )_{\mH},
\end{equation}
where we use $_\mH$ to represent a state in $\mH$, and $\ket{b_{\bell}}$ is unnormalized. 

Next, we briefly review Li \emph{et al.}'s results on Fisher-symmetric measurements~\cite{li2016fisher}. A measurement is called Fisher-symmetric measurement for pure states, if and only if 
\begin{equation}
    I(\psi_\phi,M) = \frac{1}{2} J(\psi_\phi). 
\end{equation}
The coefficient $1/2$ is optimal if we want the above condition to hold for all pure states. However, we do allow $M$ to be chosen based on the knowledge of $\psi_\phi$. \cite{li2016fisher} shows there exists a $2\dim(\mH)-1$-outcome rank-one measurement $\{\ket{a_\bell}\bra{a_\bell}\}_{\bell \in [2\dim(\mH)-1]}$ (with $\ket{a_\bell}$ unnormalized, $\braket{a_\bell|a_\bell} = \dim(\mH) / (2\dim(\mH)-1)$) that is Fisher-symmetric and
\begin{equation}
   p_{\bell,\phi} = |\!\braket{\psi_\phi | a_\bell}\!|^2 = \frac{1}{2\dim(\mH)-1},
\end{equation}
when $\phi$ is the true value. 
A canonical choice of locally unbiased estimator from classical probability theory that achieves the CFIM is 
\begin{equation}
    \hat\phi(\bell)_i =  \sum_{j}  (I(\psi_\phi,M)^{-1})_{ij} \frac{\partial_j p_{\bell,\phi}}{p_{\bell,\phi}}. 
\end{equation}
$\hat\theta(\bell)_i$ must also be \emph{bounded} at $\phi$ because $p_{\bell,\phi} = \frac{1}{2\dim(\mH)-1}$, $I(\psi_\phi,M) = \frac{1}{2} J(\psi_\phi)$ is lower bounded by some positive constant (from \lemmaref{lemma:1}), and  
\begin{align}
    \big|\partial_j p_{\bell,\phi}\big| &= \big|\Re[\braket{a_\bell|\partial_j\psi_\phi}\!\braket{\psi_\phi|a_\bell}]\big| 
    \leq \norm{ \ket{\partial_j\psi_\phi}\!\bra{\psi_\phi} + \ket{\psi_\phi}\!\bra{\partial_j\psi_\phi}} \\  
    &\leq \sqrt{J(\psi_{\theta,\varphi})_{jj}} \leq \sqrt{\norm{J(\psi_{\theta,\varphi})}},
\end{align}
which is bounded using \lemmaref{lemma:3}.

\section{Proof of \texorpdfstring{\eqref{eq:sm-1}}{Eq.24} and \texorpdfstring{\eqref{eq:sm-2}}{Eq.25}}

Here we prove \eqref{eq:sm-1} and \eqref{eq:sm-2} in the main text. They follow directly from the local unbiasedness of $\htheta^{(k)}(s,\ell_k)_i$, i.e. 
\begin{gather}
\bE_{\ell_k} [\htheta^{(k)}(s,\ell_k)] \big|_{\psi_{\ctheta,\cvarphi}} \!= \ctheta, \\
\frac{\partial}{\partial \phi_j}\bE_{\ell_k} [\htheta^{(k)}(s,\ell_k)_i] \big|_{\psi_{\ctheta,\cvarphi}} \!= \delta_{ij},  
\end{gather} 
and 
\begin{equation}
\label{eq:dev}
    \psi_{\phi} = \psi_{\cphi} + (\nabla \psi_{\cphi}) \cdot \Delta\phi + O(\|\Delta\phi\|^2). 
\end{equation}
Note that here the term $O(\|\Delta\phi\|^2)$ converges uniformly (independent of the choice of $\theta,\varphi,\cU$). Recall that 
\begin{equation}
    \ket{\psi_\phi} = \ket{\psi_{\theta,\varphi}} = \sum_{j=1}^r  \sqrt{\lambda_{j,\theta}}\ket{e_{j,\theta}}_S \exp\left(-i\sum_{i=1}^{r^2-1} \varphi_i H_i\right)\cU\ket{j}_E
\end{equation}
It is sufficient to show \eqref{eq:dev} at $\varphi = 0$ for any $\cU$ and $\theta$, because having $\varphi \neq 0$ is equivalent to multiplying $\cU$ by another unitary. Then the uniform convergence is guaranteed due to the continuity of $\nabla^2 \psi_{\cphi}$ with respect to $(\theta,\cU)$. The compactness of $\Theta \times {\rm PU}(r)$ then implies the uniform continuity of $\nabla^2 \psi_{\cphi}$. By the same argument, we also note $\|\nabla \psi_{\cphi}\|$ is upper bounded by a universal constant. Below we also assume boundedness of estimators, as shown from \appref{app:boundedness}.

First, we have 
\begin{equation}
\begin{split}
    \bE_{\ell_k} [\htheta^{(k)}(s,\ell_k)_i] 
    &= 
    \sum_{\ell_k} \htheta^{(k)}(s,\ell_k)_i \bra{b_{\ell_k}} \psi_{\phi} \ket{b_{\ell_k}}\\
    &= \sum_{\ell_k} \htheta^{(k)}(s,\ell_k)_i \bra{b_{\ell_k}}  \big(\psi_{\cphi} + (\nabla \psi_{\cphi}) \cdot \Delta\phi + O(\|\Delta\phi\|^2) \big) \ket{b_{\ell_k}}\\
    &= \ctheta_i + \Delta\theta_i + O(\|\Delta\phi\|^2) = \theta_i + O(\|\Delta\phi\|^2). 
\end{split}
\end{equation}
The overall estimator is $\htheta(s,\ell)_i = \frac{1}{n_2} \sum_{k=1}^{n_2} \htheta^{(k)}(s,\ell_k)_i$. 
We now calculate the variance: 
\begin{equation}
\begin{split}
    ( V |_{T_{U,s,n_1}})_{ij} 
    &= 
\sum_{\ell} (\htheta(s,\ell)_i-\theta_i) (\htheta(s,\ell)_j-\theta_j)  \!\bra{b_\ell}\!\psi_{\theta,\varphi}^{\otimes n_2}\!\ket{b_\ell}\\
&= \sum_{\ell} \left(\frac{1}{n_2} \sum_{k=1}^{n_2} \htheta^{(k)}(s,\ell_k)_i-\theta_i\right) \left(\frac{1}{n_2} \sum_{k=1}^{n_2} \htheta^{(k)}(s,\ell_k)_j-\theta_j\right)  \!\bra{b_\ell}\!\psi_{\theta,\varphi}^{\otimes n_2}\!\ket{b_\ell}
\end{split}
\end{equation}
First, 
\begin{equation}
\label{eq:c7}
\begin{split}
    &\quad\,\sum_{\ell} \left(\htheta^{(1)}(s,\ell_1)_i-\theta_i\right) \left( \htheta^{(1)}(s,\ell_1)_j-\theta_j\right)  \!\bra{b_\ell}\!\psi_{\theta,\varphi}^{\otimes n_2}\!\ket{b_\ell}\\
    &= \sum_{\ell_1} \left(\htheta^{(1)}(s,\ell_1)_i-\theta_i\right) \left( \htheta^{(1)}(s,\ell_1)_j-\theta_j\right)  \!\bra{b_{\ell_1}}\!\psi_{\theta,\varphi}\!\ket{b_{\ell_1}}\\
    &= \sum_{\ell_1} \left(\htheta^{(1)}(s,\ell_1)_i-\theta_i\right) \left( \htheta^{(1)}(s,\ell_1)_j-\theta_j\right)  \!\bra{b_{\ell_1}}\!\psi_{\ctheta,\cU}\!\ket{b_{\ell_1}} + O(\norm{\Delta\phi}).  
    \end{split}
\end{equation}
Second,
\begin{equation}
\label{eq:c8}
\begin{split}
    &\quad\,\sum_{\ell} \left(\htheta^{(1)}(s,\ell_1)_i-\theta_i\right) \left( \htheta^{(2)}(s,\ell_2)_j-\theta_j\right)  \!\bra{b_\ell}\!\psi_{\theta,\varphi}^{\otimes n_2}\!\ket{b_\ell}\\
    &= \sum_{\ell_1,\ell_2} \left(\htheta^{(1)}(s,\ell_1)_i-\theta_i\right) \left( \htheta^{(2)}(s,\ell_2)_j-\theta_j\right)  \!\bra{b_{\ell_1}}\!\psi_{\theta,\varphi}\!\ket{b_{\ell_1}} \bra{b_{\ell_2}}\!\psi_{\theta,\varphi}\!\ket{b_{\ell_2}} = O(\norm{\Delta\phi}^2)\cdot O(\norm{\Delta\phi}^2) = O(\norm{\Delta\phi}^4). 
    \end{split}
\end{equation}
Finally, combining \eqref{eq:c7} and \eqref{eq:c8}, 
\begin{align}
    \trace( W  V |_{T_{U,s,n_1}}) 
    &= \frac{1}{n_2} \left(\sum_{ij} W_{ji}\sum_{\ell_1} \left(\htheta^{(1)}(s,\ell_1)_i-\theta_i\right) \left( \htheta^{(1)}(s,\ell_1)_j-\theta_j\right)  \!\bra{b_{\ell_1}}\!\psi_{\ctheta,\cU}\!\ket{b_{\ell_1}} + O(\norm{\Delta\phi})\right) + O(\norm{\Delta\phi}^4) \nonumber\\
    &= \frac{1}{n_2}\big(C_{\rm H}(\psi_{\ctheta,\cvarphi},W^*) + O(\|\Delta\phi\|)\big) + O(\|\Delta\phi\|^4). 
\end{align}

\section{Continuity of QCRB and HCRB}

Here we show the continuity of QCRB and HCRB, that is 
\begin{equation}
    \lim_{\substack{\ctheta\rightarrow \theta\\\cvarphi\rightarrow\varphi}} C_{\rm H,F}(\psi_{\ctheta,\cvarphi},W^*) = C_{\rm H,F}(\rho_{\theta},W).
\end{equation}
Note that using \thmref{thm:purification}, it is sufficient to show 
\begin{equation}
    \lim_{\ctheta\rightarrow \theta} C_{\rm H,F}(\rho_{\ctheta},W) = C_{\rm H,F}(\rho_{\theta},W). 
\end{equation}

First, the continuity of the QCRB is straightforward because $J(\rho_\theta)$ can be explicitly expressed as 
\begin{equation}
    J(\rho_\theta)_{ij} = 2 \sum_{k,k':\lambda_{k,\theta} + \lambda_{k',\theta} > 0} \frac{\Re[\braket{\psi_{k,\theta}|\partial_i\rho_\theta|\psi_{k',\theta}}\braket{\psi_{k',\theta}|\partial_j\rho_\theta|\psi_{k,\theta}}]}{\lambda_{k,\theta} + \lambda_{k',\theta}}.
\end{equation}
Since we assume the rank $r$ of $\rho_\theta$ does not change, the continuity of the QCRB directly follows from the continuity of the derivatives. 

Next, we show the continuity of the HCRB. 
\begin{equation}
     C_{\rm H}(\rho_\theta,W):=  \min_{\vX:=\{X_i\}_{i=1}^m} \!\trace(W \Re[Z(\vX)]) +  \big\|\sqrt{W}  \Im[Z(\vX)]  \sqrt{W}\big\|_1, 
\end{equation}
where $X_i$ are Hermitian and satisfy
\begin{equation}
    \trace\bigg(X_i \frac{\partial \rho_\theta}{\partial \theta_j}  \bigg) = \delta_{ij}, \quad Z(\vX)_{ij} = \trace(\rho_\theta X_i X_j). 
\end{equation}
Assume $C_{\rm H}(\rho_\theta,W)$ is bounded in the neighborhood of a local point $\theta$ we consider. 
Since $\rho_\theta$ is rank-$r$ and let $\Pi^{(\theta)}$ be a projection onto its support and $\Pi^{(\theta)}_\perp = \id - \Pi^{(\theta)}$. The values of $X_i$ in $\Pi^{(\theta)}_\perp X_i \Pi^{(\theta)}_\perp$ do not contribute to the target function. So we restrict $X_i$ to be in the $2dr-r^2$-dimensional subspace where $\Pi^{(\theta)}_\perp X_i \Pi^{(\theta)}_\perp = 0$. Let $\{G_i^{(\theta)}\}_{i=1}^{2dr-r^2}$ be a continuous, orthonormal basis of this subspace such that $X_i$ can be expressed as $X_i = \sum_i (x_i)_j G_j^{(\theta)}$ where $x_i \in \bR^{2dr-r^2}$. Let $X = (x_1\;x_2\;\cdots\;x_m) \in \bR^{(2dr-r^2)\times m}$. Let 
\begin{equation}
    (S_\theta)_{ij} = \trace(G_i^{(\theta)} G_j^{(\theta)} \rho_\theta),\quad (D_\theta)_{ij} = \trace(G_i^{(\theta)} \partial_j \rho_\theta). 
\end{equation}
Then the constraint on $X_i$ can be written as 
\begin{equation}
    Z[\vX] = X^\top S_\theta X,\quad X^\top D_\theta = \id_m. 
\end{equation}
The HCRB can be rewritten as~\cite{albarelli2019evaluating}
\begin{gather}
    C_{\rm H}(\rho_\theta,W) = \trace(W V),\text{~~~~subject~to}\\
    \begin{pmatrix}
        V & X^\top \sqrt{S_\theta} \\
        \sqrt{S_\theta} X & \id 
    \end{pmatrix} \succeq 0,\quad X^\top D_\theta = \id_m,\quad V \in \bR^{m\times m} \text{ is real symmetric.} 
\end{gather}
Now $S_\theta$, $\sqrt{S_\theta}$, $D_\theta$ all vary continuously with $\theta$. Note that $S_\theta$ and $\sqrt{S_\theta}$ cannot be singular here because the matrix with entries $\trace(G_i^{(\theta)} G_j^{(\theta)} \Pi^{(\theta)})$ is positive, and $\rho_\theta$ is supported on $\Pi^{(\theta)}$ and thus $\rho_\theta \succeq c_\theta\Pi^{(\theta)}$ for some $c_\theta > 0$. Here $c_\theta$ is strictly positive in some neighborhood of a local point $\theta$, which is sufficient for our purpose. 
Below we restrict the domain of $\theta$ to this neighborhood. We can further impose the constraint that $D_\theta$ is full-rank without loss of generality because the linear independence of $\partial_j \rho_\theta$. 

We first prove lower semicontinuity. Consider an arbitrary sequence $\theta^{[k]} \rightarrow \theta$, and let $(V^{[k]},X^{[k]})$ be the optimal, feasible solutions for $C_{\rm H}(\rho_{\theta^{[k]}},W)$. $V^{[k]}$ is bounded because $\trace(V^{[k]}) \leq \norm{W^{-1}} \trace(W V^{[k]})$. $X^{[k]}$ is bounded because $V^{[k]} \succeq (X^{[k]})^\top S_\theta X^{[k]} \succeq 0$ and $S_\theta$ is bounded below by some positive constant. Then according to Bolzano-Weierstrass theorem, there exists a subsequence of the entire sequence such that $(V^{[k]},X^{[k]})$ converges to $(V^{\diamond},X^{\diamond})$. $(V^{\diamond},X^{\diamond})$ is feasible (satisfying the constraint) at the point $\theta$. Thus 
\begin{equation}
    C_{\rm H}(\rho_\theta,W) \leq \trace(W V^{\diamond}) ~~\Rightarrow~~  C_{\rm H}(\rho_\theta,W) \leq \liminf_{\theta' \rightarrow\theta} C_{\rm H}(\rho_{\theta'},W). 
\end{equation}

Next, we prove upper semicontinuity. 
Let $(V,X)$ be an optimal feasible solution at $\theta$ and let $\widehat{V} = V + \eta \id$ for some $\eta > 0$. Then $(\widehat{V},X)$ is feasible at $\theta$. 
Let
\begin{equation}
    X' = X + D_{\theta'} (D_{\theta'}^\top D_{\theta'})^{-1} (\id - X^\top D_{\theta'})^\top,
\end{equation}
which satisfies $D_{\theta'}^\top X' = D_{\theta}^\top X = \id$. For sufficiently close $\theta'$, $(\widehat{V},X')$ is also feasible at $\theta'$. 
Then 
\begin{equation}
    C_{\rm H}(\rho_{\theta'},W)  \leq \trace(W \widehat{V}) = \trace(W V) + \eta \trace(W) \leq C_{\rm H}(\rho_{\theta},W) + \eta \trace(W), 
\end{equation}
and taking $\eta \rightarrow 0$, 
\begin{equation}
\limsup_{\theta'\rightarrow \theta} C_{\rm H}(\rho_{\theta'},W) \leq C_{\rm H}(\rho_{\theta},W). 
\end{equation}

% \section{Fisher-symmetric measurements}
% \label{app:fisher}

\end{document}